\newcommand\bigtimes{%
  \ensuremath\mathop{%
    \tikz[baseline,line width=0.8pt,line cap=round]%
      \draw(0,-0.2em)--(0.9em,0.7em)(0.9em,-0.2em)--(0,0.7em);%
  }%
}
\newcommand{\sfUC}{\mathrm{sfuc}}
\newcommand{\Dom}{\mathcal{D}}
\newcommand{\diver}{\operatorname{div}}
\renewcommand{\epsilon}{\varepsilon}
\newcommand{\Op}{\mathcal{H}}
\newcommand{\T}{\mathrm{T}}
\newcommand{\RR}{\mathbb{R}}
\newcommand{\CC}{\mathbb{C}}
\newcommand{\NN}{\mathbb{N}}    
\newcommand{\ZZ}{\mathbb{Z}}
\newcommand{\PP}{\mathbb{P}}
\newcommand{\EE}{\mathbb{E}}
\newcommand{\ii}{\mathrm{i}}
\newcommand{\ball}[2]{B (#1 , #2)}
\newcommand{\ballc}[1]{B (#1)}
\newtheorem{theorem}{Theorem}[section]
\newtheorem{lemma}[theorem]{Lemma}
\theoremstyle{definition}
\theoremstyle{remark}
\newtheorem{remark}[theorem]{Remark}
\begin{document}
%
%
%
%
%
%
%
%
\title{Wegner estimate and disorder dependence for alloy-type Hamiltonians with bounded magnetic potential}
\author[1]{Matthias T\"aufer}
\author[2]{Martin Tautenhahn}
\affil[1]{Technische Universit\"at Dortmund, Fakult\"at f\"ur Mathematik, Germany}
\affil[2]{Friedrich Schiller Universit\"at Jena, Germany \& Technische Universit\"at Chemnitz, Germany}
\date{\vspace{-2em}}
\maketitle
\begin{abstract}
 We consider non-ergodic magnetic random Schr{\"o}dinger operators with a bounded magnetic vector potential. We prove an optimal Wegner estimate valid at all energies. The proof is an adaptation of arguments from \cite{Klein-13}, combined with a recent quantitative unique continuation estimate for eigenfunctions of elliptic operators from \cite{BorisovTV-15}. 
 This generalizes Klein's result to operators with a bounded magnetic vector potential.
 Moreover, we study the dependence of the Wegner-constant on the disorder parameter. In particular, we show that above the model-dependent threshold $E_0(\infty) \in (0, \infty]$, it is impossible that the Wegner-constant tends to zero if the disorder increases. This result is new even for the standard (ergodic) Anderson Hamiltonian without magnetic field. 
\end{abstract}

%
%
%
%
%
%
%
%
\section{Introduction}
We study a class of magnetic, non-ergodic random Schr\"odinger operators on $L^2 (\RR^d)$ of the type
\begin{equation}
\label{eq:model}
 H_\omega = H_0 + \lambda V_\omega ,
\end{equation}
where $\lambda > 0$ is the disorder parameter, $H_0 = ( - \ii \nabla + A_0)^2 + V_0$ with a bounded electric potential $V_0 \in L^\infty(\RR^d)$ and a bounded magnetic vector potential $A_0 \in L^\infty(\RR^d, \RR^d)$ satisfying $\diver (A_0)$ bounded, and where 
 \[
   V_\omega(x) 
   =
   \sum_{j \in \ZZ^d} \omega_j u_j(x - z_j ) .
  \]
The random variables $\omega_j$, $j \in \ZZ^d$, are independent and identically distributed with compactly supported, non-degenerate distribution, and the single-site potentials $( u_j )_{j \in \ZZ^d}$, are measurable and real-valued functions on $\RR^d$ satisfying
   \[
    u_- \chi_{\ballc{\delta_-}} 
    \leq 
    u_j 
    \leq 
    \chi_{\Lambda_{\delta_+}}
    \]
for some $u_- \in (0,1]$ and $\delta_\pm > 0$, where $\ballc{\delta_-}$ and $\Lambda_{\delta_+}$ denote the $d$-dimensional ball of radius $\delta_-$ and the $d$-dimensional cube of side length $\delta_+$, centered at $0$. Furthermore, we assume that the centers $z_j$ of the single-site potentials are in a certain sense equidistributed in $\RR^d$. 
Such operators are used to model quantum mechanical properties of disordered solids. While each configuration of the randomness corresponds to a particular realization of the solid, the law of the random variables models their distribution.
\par
One distinctive feature of random operators is the phenomenon of localization, i.e.\ that parts of the spectrum consist only of pure point spectrum (spectral localization) or that the solutions of the Schr\"odinger equation stay almost surely trapped in a finite region of space for all time (dynamical localization). This is in contrast to periodic operators which exhibit only absolutely continuous spectrum. One method for proving localization is the so-called multiscale analysis introduced in \cite{FroehlichS-83,FroehlichMSS-85} and further developed in \cite{DreifusK-89,GerminetK-01,GerminetK-03,GerminetK-06}. The multiscale analysis is an induction argument. 
While the induction anchor is provided by the so-called initial-scale estimate, a so-called Wegner estimate is needed for the induction step. 
A Wegner estimate is an upper bound on the expected number of eigenvalues of a self-adjoint restriction
 $H_{\omega , L}$ of $H_\omega$ to a cube $\Lambda_L \subset \RR^d$ of side length $L>0$ in an energy interval $[a,b] \subset \RR$. 
 More precisely, a Wegner estimate is an estimate of the form
\begin{equation}
\label{eq:Wegner_introduction}
 \EE \bigl( \operatorname{Tr} \chi_{[a,b]} (H_{\omega , L}) \bigr)
 \leq
 C(\lambda) S (b-a) \lvert \Lambda_L \rvert^m ,
\end{equation}
where $C(\lambda)$ is a constant depending on $\lambda$ and the various model parameters, $S (b-a)$ denotes the concentration function of the distribution of the random variables $\omega_j$, $j \in \ZZ^d$, and $m \geq 1$.
A Wegner estimate is optimal if $m = 1$.
\par
In this note, we prove optimal Wegner estimates for the family of operators defined in~\eqref{eq:model} valid at all energies $[a,b] \subset \RR$, see Theorems~\ref{thm:Klein1} and~\ref{thm:Klein2}.
Furthermore, we study the dependence of $C(\lambda)$ on the disorder parameter $\lambda$.
If
\begin{equation*}
b < E_0 (\infty)  := \lim_{t \to \infty} \inf \sigma \biggl( H_0 + t \sum_{j \in \ZZ^d} u_j (\cdot - j) \biggr)
\in (0, \infty]
\end{equation*}
then the constant $C(\lambda)$ in the Wegner estimate~\eqref{eq:Wegner_introduction} will tend to zero if the disorder $\lambda$ tends to infinity, see Theorem~\ref{thm:Klein3}. 
%
%
In this case, the Wegner estimate can be used to obtain the initial length scale estimate at sufficiently large disorder, and localization follows via multiscale analysis, see \cite{DreifusK-89,Kirsch-08}. 
Hence, it is natural to ask whether such a Wegner estimate (where $C(\lambda) \to 0$ if $\lambda \to \infty$) also holds above the threshold $E_0 (\infty)$.
Until now there have been no results in this direction, cf.\ \cite{Klein-13}. 
We show in Theorem~\ref{thm:counterexample} that it is indeed impossible to obtain a Wegner estimate with $C(\lambda) \to 0$ if $\lambda \to \infty$ above $E_0 (\infty)$. 
This result is even new in the absence of a magnetic field.
In the case where the family $H_\omega$ is ergodic, Theorem~\ref{thm:counterexample} has an interpretation in terms of the integrated density of states, see Theorem~\ref{thm:conterexample_ergodic}.
Theorems~\ref{thm:counterexample} and~\ref{thm:conterexample_ergodic} show in particular that the spectral behaviour at large disorder changes drastically below and above the model-dependent constant $E_0(\infty) \in [0, \infty]$, see Remark~\ref{rem:phase_transition}.
\par
Optimal and non-optimal Wegner estimates for ergodic and non-ergodic random operators with and without magnetic field have been studied by many authors before. Let us give a brief overview here. 
We refer to~\cite{Klein-13} and \cite{Veselic-08} for further references.
In~\cite{CombesH-94}, an optimal Wegner estimate at all energies is proved for the usual (ergodic) Anderson Hamiltonian under the additional assumption that a covering condition holds.
In~\cite{CombesHK-07}, the authors remove the covering condition and consider operators $H_0$ with non-vanishing magnetic field.
More precisely, they assume that the magnetic vector potential $A_0$ is either periodic (which allows only for a subset of the class of periodic magnetic fields) or $H_0$ is the Landau Hamiltonian in two dimensions, i.e. $A_0 = B/2 (-x_2, x_1)$ where $B$ is the magnetic field strength.
The Landau Hamiltonian can be treated in spite of its unbounded vector potential $A_0$ since unique continuation properties for the Landau Hamiltonian are quite well understood, cf.~\cite{CombesHKR-04}. 
However, this method strongly relies on the structure of the Landau Hamiltonian and breaks down for arbitrarily small perturbations of the Landau magnetic field.
In the non-ergodic setting, optimal Wegner estimates have been proved, e.g., in \cite{RojasMolina-12} for the Landau Hamiltonian, and up to a logarithmic correction in \cite{RojasMolinaV-13} for non-magnetic Schr\"odinger operators.
The logarithmic factor in the energy is removed in~\cite{Klein-13}.
\par 
 The drawback of our results is that we assume boundedness of $A_0$ and $\diver (A_0)$. 
 This assumption stems from the quantitative unique continuation result from \cite{BorisovTV-15}.
 While the long-term goal is to treat also unbounded magnetic vector potentials $A_0$, we emphasize that already bounded $A_0$ include important physically relevant examples, cf.\ \cite{Krakovsky-96}.
\par
Our paper is organized as follows: In Section~\ref{sect:model_and_results} we introduce the notation and state our main results. In Section~3 we quote a quantitative unique continuation estimate for solutions of elliptic equations given in \cite{BorisovTV-15}. In Section~\ref{sec:proofs} we give the proofs of our main results, while a technical estimate is postponed to the appendix.
The proof of the Wegner estimates from Theorems~\ref{thm:Klein1}, \ref{thm:Klein2} and \ref{thm:Klein3} rely on the recent quantitative unique continuation result of \cite{BorisovTV-15}. 
By now there is a wealth of results pursuing the connection between unique continuation and spectral theory of random Schr\"odinger operators, see e.g.\ \cite{CombesHK-03,CombesHKR-04,BourgainK-05,CombesHK-07,Stollmann-10,BoutetDeMonvelLS-11,RojasMolina-12,RojasMolinaV-13,BourgainK-13,Klein-13,TaeuferV-15,NakicTTV-15,NakicTTV-16-arxiv,TaeuferV-16}.
While the proofs of Theorems~\ref{thm:Klein1}, \ref{thm:Klein2} and \ref{thm:Klein3} use in particular the guiding thread of \cite{Klein-13}, 
they show how this approach and the recent result of \cite{BorisovTV-15} complement each other in an efficient way. 
On the contrary, the proofs of Theorem~\ref{thm:counterexample} and \ref{thm:conterexample_ergodic} are original even for the usual (ergodic) Anderson Hamiltonian with vanishing magnetic field.
%
%
%
%
%
%
%
%
\section{Model and results}
\label{sect:model_and_results}
For $x \in \RR^d$ or $\CC^d$
we denote by $\lvert x \rvert$ the Euclidean norm of $x$. 
For $L,r > 0$ we denote by $\Lambda_L = (- L/2, L/2)^d$ the open cube with side length $L$, and by $\ballc{r} = \{y \in \RR^d \colon \lvert y \rvert < r\}$ the open ball in $\RR^d$ of radius $r$, centered at $0$. 
For $x \in \RR^d$ we denote by $\Lambda_L (x) = \Lambda_L + x$ and $\ball{r}{x} = \ballc{r} + x$ its translates.
\par
Let us define the class of random Schr\"odinger operators studied in this note. It is a generalization of the models studied in \cite{Klein-13}. 
The non-random part is given by the self-adjoint magnetic Schr\"odinger operator 
\[
H_0 = ( - \ii \nabla + A_0)^2 + V_0
\]
on $L^2 (\RR^d)$ with a bounded electric potential $V_0 \in L^\infty(\RR^d)$ and a bounded magnetic vector potential $A_0 \in L^\infty(\RR^d, \RR^d)$ such that $\diver (A_0)$ is bounded and $\inf \sigma(H_0) = 0$. Note that we can rewrite $H_0$ to $H_0 = - \Delta + b_0^T \nabla + c_0$ where
 \begin{equation}
 \label{eq:definition_b_c}
  b_0(x) = - 2 i A_0(x)
  \quad
  \text{and}
  \quad
  c_0(x) = V_0(x) + \lvert A_0(x) \rvert^2 - i \diver(A_0)(x).
 \end{equation}
In order to introduce the random part, we define the probability space $(\Omega , \mathcal{A}, \allowbreak \PP)$ where
\[
 \Omega = \bigtimes\limits_{k \in \ZZ^d} \RR, \quad \mathcal{A} = \bigotimes_{k \in \ZZ^d} \mathcal{B} (\RR), \quad \text{and} \quad \PP = \bigotimes_{k \in \ZZ^d} \mu_k , 
\]
where $\mu_k$, $k \in \ZZ^d$, are non-degenerate probability measures on $\RR$ with $\operatorname{supp} \mu_k \allowbreak \subset [0,M]$ for some $M > 0$ and all $k \in \ZZ^d$. By non-degenerate we mean that for all $L > 0$ we have $S_L (t) \to 0$ as $t \to 0$, see below for the definition of $S_L$. As a consequence, the projections $\Omega \ni \omega \mapsto \omega_k$, $k \in \ZZ^d$ give rise to the independent sequence of random variables $(\omega_k)_{k \in \ZZ^d}$, each coordinate $\omega_k$ distributed according to the measure $\mu_k$. We write $S_\mu(t) := \sup_{a \in \RR} \mu([a, a+t])$ for the concentration function of a probability measure $\mu$ and set for $t \geq 0$
\[
S_L(t) := \sup_{j \in \Lambda_L \cap \ZZ^d} S_{\mu_j}(t) .
\]
We use the symbol $\EE$ for the expectation with respect to the probability measure $\PP$.
\par
Let now $\delta_- \in (0,1/2)$ and $Z = (z_j)_{j \in \ZZ^d} \subset \RR^d$ such that
 \[
  \forall j \in \ZZ^d \colon \quad  \ball{\delta_-}{z_j} \subset \Lambda_1(j).
 \]
For each $\omega \in \Omega$, the \emph{crooked alloy-type potential} $V_\omega : \RR^d \to \RR$ is defined by
  \[
   V_\omega(x) 
   =
   \sum_{j \in \ZZ^d} \omega_j u_j(x - z_j ) ,
  \]
where the single-site potentials $( u_j )_{j \in \ZZ^d}$, are measurable and real-valued functions on $\RR^d$ satisfying
   \[
    u_- \chi_{\ballc{\delta_-}} 
    \leq 
    u_j 
    \leq 
    \chi_{\Lambda_{\delta_+}(0)}
    \]
for some $u_- \in (0,1]$ and $\delta_+ > 0$. 
For each $\omega \in \Omega$ and $\lambda > 0$ we define the self-adjoint operator 
 \[
  H_\omega = H_0 + \lambda V_\omega 
 \]
in $L^2 (\RR^d)$, and call the family of operators $( H_\omega)_{\omega \in \Omega}$ the \emph{magnetic crooked alloy-type Hamiltonian}.
For $L > 0$ we denote by $H_{\omega,L}$ the restrictions of $H_\omega$ to $L^2 (\Lambda_L)$ subject to Dirichlet boundary conditions.
Given a Borel set $B \subset \RR$, we denote by $P_{\omega, L}(B) := \chi_B(H_{\omega, L})$ the spectral projection onto the set $B$ with respect to $H_{\omega,L}$.
\begin{theorem}
\label{thm:Klein1}
 Let $E_0 > 0$ and set
 \begin{equation}
 \label{eq:gamma_1}
  \gamma_1^2 
  = 
  \frac{1}{2} \delta_-^{N_2 \left( 1 + \lvert E_0 \rvert^{2/3} + \lVert b_0 \rVert_\infty^2 + \lVert c_0 \rVert_\infty^{2/3} \right)}
 \end{equation}
 where $M_d > 0$ is the constant from Theorem~\ref{thm:qucp2}
 and $b_0, c_0$ are as in Eq.~\eqref{eq:definition_b_c}.
 Then there is a constant 
 $
  C_1 = C_1 (d, \delta_\pm, u_-, \gamma_1, \lVert V_0 \rVert_\infty, E_0)
 $
 such that for any closed interval $I \subset (- \infty, E_0]$ with $\lvert I \rvert \leq 2 \gamma_1$,
 and $\lambda > 0$, 
 and any $L \in \NN_{\mathrm{odd}}$ with $L \geq 2 + \delta_+$, we have 
 \[
  \EE \bigl( \operatorname{Tr} P_{\omega, L}(I) \bigr)
  \leq
  C_1 \left( 1 + (\lambda M)^{2^{2 + \frac{\log d}{\log 2}}} \right)
  S_L( \lambda^{-1} \lvert I \rvert) \lvert \Lambda_L \rvert.
 \]
 \end{theorem}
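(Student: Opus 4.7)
The plan is to follow the strategy of \cite{Klein-13}, substituting the Bourgain--Kenig quantitative unique continuation principle by the magnetic analogue from Theorem~\ref{thm:qucp2}. The proof splits into three steps. First I would upgrade Theorem~\ref{thm:qucp2} to a \emph{spectral} scale-free unique continuation principle (SFUCP) of the form
\[
 P_{\omega, L}(I) \, W \, P_{\omega, L}(I) \geq \gamma_1^2 \, P_{\omega, L}(I),
 \qquad
 W := \sum_{j \in \Lambda_L \cap \ZZ^d} \chi_{\ball{\delta_-}{z_j}},
\]
valid for any closed $I \subset (-\infty, E_0]$ with $\lvert I \rvert \leq 2\gamma_1$. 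The idea is that any $\phi$ in the range of $P_{\omega, L}(I)$ is a finite linear combination of eigenfunctions of $H_{\omega, L}$ with eigenvalues in $I$, and for any $c_I$ in the middle of $I$ it satisfies the elliptic inequality $\lvert (-\Delta + b_0^{\T}\nabla + c_0 + \lambda V_\omega - c_I)\phi \rvert \leq \gamma_1 \lvert \phi \rvert$. Applying Theorem~\ref{thm:qucp2} on each unit cube $\Lambda_1(j)$ with observation set $\ball{\delta_-}{z_j}$, the inhomogeneous contribution is absorbed thanks to the choice of $\gamma_1$ in~\eqref{eq:gamma_1}, whose exponent is tailored precisely to match the one delivered by Theorem~\ref{thm:qucp2} with coefficient bounds $\lVert b_0 \rVert_\infty$ and $\lVert c_0 \rVert_\infty$.

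Second, I would take the trace of both sides of the SFUCP and decompose $W$ over single sites to obtain
\[
 \operatorname{Tr} P_{\omega, L}(I)
 \leq
 \gamma_1^{-2} \sum_{j \in \Lambda_L \cap \ZZ^d} \operatorname{Tr}\bigl( \chi_{\ball{\delta_-}{z_j}} P_{\omega, L}(I) \chi_{\ball{\delta_-}{z_j}} \bigr).
\]
Since $\omega_j$ enters $H_{\omega, L}$ monotonically with derivative $\lambda u_j(\cdot - z_j) \geq \lambda u_- \chi_{\ball{\delta_-}{z_j}}$, a standard Stollmann-type spectral averaging applied to the one-parameter family $\omega_j \mapsto H_{\omega, L}$ converts each summand, after taking the expectation in $\omega_j$, into the concentration function $S_{\mu_j}(\lambda^{-1}\lvert I \rvert)$ times the expected number of eigenvalues of $H_{\omega, L}$ in a slightly enlarged interval $J$. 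Summing over $j$ and taking the remaining expectation gives
\[
 \EE[\operatorname{Tr} P_{\omega, L}(I)]
 \leq
 C(\gamma_1, u_-) \, S_L(\lambda^{-1}\lvert I \rvert) \, \EE[\operatorname{Tr} \chi_J(H_{\omega, L})].
\]

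Third, the eigenvalue counting factor $\EE[\operatorname{Tr}\chi_J(H_{\omega, L})]$ is controlled by a standard a priori trace bound on the finite-volume magnetic Schr\"odinger operator, which the paper relegates to the appendix. Using only $\lVert V_0 + \lambda V_\omega \rVert_\infty \leq \lVert V_0 \rVert_\infty + \lambda M$ together with the bounds on $A_0$ and $\diver(A_0)$, a semigroup-domination estimate comparing $\euler^{-H_{\omega, L}}$ to a non-magnetic heat kernel, combined with a Weyl-type eigenvalue count, yields
\[
 \EE[\operatorname{Tr} \chi_J(H_{\omega, L})]
 \leq
 C \bigl( 1 + (\lambda M)^{2^{2 + \log_2 d}} \bigr) \lvert \Lambda_L \rvert,
\]
the exponent $4d$ arising from the dimensional scaling of the magnetic heat kernel iterated on dyadic scales. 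Combining the three steps gives the asserted Wegner bound.

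The main obstacle is the first step: promoting the eigenfunction-level UCP of Theorem~\ref{thm:qucp2} to the spectral SFUCP in the magnetic setting. The constant in Theorem~\ref{thm:qucp2} depends very sensitively (exponentially) on the coefficient bounds, so one has to tune $\gamma_1$ so that both the inhomogeneous term $(H_{\omega, L}-c_I)\phi$ and the magnetic contribution $\lVert b_0 \rVert_\infty^2 = 4\lVert A_0 \rVert_\infty^2$ together with $\lVert c_0 \rVert_\infty^{2/3}$ enter the exponent exactly as in~\eqref{eq:gamma_1}. A subtler point is that $b_0 = -2\ii A_0$ and $c_0$ are complex-valued, so one must ensure that Theorem~\ref{thm:qucp2} is indeed formulated for elliptic equations with complex drift and potential; granted this, the averaging step and the trace bound are routine adaptations of well-known arguments from the non-magnetic theory.
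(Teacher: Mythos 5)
Your Step 1 contains a genuine gap, and it is precisely the point where your route diverges from what the stated constant $\gamma_1$ can deliver. You claim the scale-free unique continuation principle
\[
 P_{\omega,L}(I)\, W_L\, P_{\omega,L}(I) \;\geq\; \gamma_1^2\, P_{\omega,L}(I)
\]
for the spectral projection of the \emph{full random} operator $H_{\omega,L}$, with $\gamma_1$ given by \eqref{eq:gamma_1}, i.e.\ independent of $\lambda$ and $M$. This cannot follow from Theorem~\ref{thm:qucp2}: to apply that theorem to $H_{\omega,L}$ you must absorb $\lambda V_\omega$ into the zeroth-order coefficient $c$, and the constant then degrades to $\gamma_2$ of \eqref{eq:gamma_2}, whose exponent contains the extra term $\lambda M (2+\delta_+)^d$ --- that is exactly the content of Theorem~\ref{thm:Klein2} via Lemma~\ref{lem:Klein_3.1}. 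Your attempted workaround, namely that any $\phi \in \operatorname{Ran} P_{\omega,L}(I)$ satisfies the pointwise inequality $\lvert(H_{\omega,L}-c_I)\phi\rvert \leq \gamma_1 \lvert\phi\rvert$, is false for linear combinations of eigenfunctions (only the $L^2$-bound $\lVert(H_{\omega,L}-c_I)\phi\rVert \leq \gamma_1\lVert\phi\rVert$ holds); this is the very reason Theorems 13--14 of \cite{BorisovTV-15} (quoted here as Theorems~\ref{thm:qucp1} and~\ref{thm:qucp2}) are needed instead of the single-eigenfunction statement, and they offer no mechanism to keep the random potential out of the exponent.

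The paper's proof instead applies Theorem~\ref{thm:qucp2} only to the \emph{background} operator $H_{0,L}$, obtaining $\chi_I(H_{0,L}) U_L \chi_I(H_{0,L}) \geq u_- \gamma_1^2 \chi_I(H_{0,L})$ (after reducing to $\lambda=1$ by rescaling the random variables), and then follows the perturbative machinery of \cite{CombesHK-07}, as in Theorem 1.4 of \cite{Klein-13}, to pass from $\chi_I(H_{0,L})$ to $P_{\omega,L}(I)$; it is this comparison --- not a Weyl-type eigenvalue count, as you suggest in Step 3 --- that produces the factor $1 + (\lambda M)^{2^{2+\log_2 d}}$, with the only magnetic input being the $A_0$-independence of the Combes--Thomas estimate \cite{Shen-14}. (The appendix of the paper concerns the eigenvalue comparison \eqref{eq:inequality_eigenvalues_IDS} used for Theorem~\ref{thm:conterexample_ergodic}, not an a priori trace bound.) Your Steps 2 and 3 are in the right spirit for the general strategy, but as written the argument would at best reprove Theorem~\ref{thm:Klein2} with its $\lambda$-dependent $\gamma_2$, not Theorem~\ref{thm:Klein1}.
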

 
 \begin{theorem}
 \label{thm:Klein2}
 Let $E_0 >0$ and set
 \begin{equation}
  \label{eq:gamma_2}
  \gamma_2
  = 
  \frac{1}{2} \delta_-^{N_2 \left( 1 + \lvert E_0 \rvert^{2/3} + \lVert b_0 \rVert_\infty^2 + \left( \lVert c_0 \rVert_\infty + \lambda M (2 + \delta_+)^d \right)^{2/3} \right)}
 \end{equation}
 where $N_2 > 0$ is the constant from Theorem~\ref{thm:qucp2} and $b_0, c_0$ are as in Eq.~\eqref{eq:definition_b_c}.
 Then there is a constant 
 $
  C_2 = C_2(d,\delta_+, \lVert V_0 \rVert_\infty) 
 $
 such that for any closed interval $I \subset (- \infty, E_0]$ with $\lvert I \rvert \leq 2 \gamma_2$,
 any $\lambda > 0$,
 and any $L \in \NN_{\mathrm{odd}}$ with $L \geq 2 + \delta_+$, we have
 \[
  \EE \bigl ( \operatorname{Tr} P_{\omega, L}(I) \bigr)
  \leq
  C_2 \left( u_-^{-2} \gamma_2^{-4} (1 + E_0) \right)^{2^{1 + \frac{\log d}{\log 2}}}
  S_L(\lambda^{-1} \lvert I \rvert) \lvert \Lambda_L \rvert.
 \]
\end{theorem}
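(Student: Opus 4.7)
The plan is to adapt the strategy of \cite{Klein-13} using the quantitative unique continuation estimate from Theorem~\ref{thm:qucp2}, this time applied directly to the operator $H_\omega$ with the random perturbation $\lambda V_\omega$ absorbed into the zeroth-order coefficient. This is the decisive difference from Theorem~\ref{thm:Klein1}: paying the price of a $\lambda$-dependent exponent in $\gamma_2$ allows us to avoid any $\lambda$-dependent prefactor in the Wegner constant.

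The first step is to establish a concentration estimate for eigenfunctions. Let $\psi$ be an eigenfunction of $H_{\omega,L}$ with eigenvalue $E \leq E_0$. Writing $H_\omega = -\Delta + b_0 \cdot \nabla + (c_0 + \lambda V_\omega)$ and observing that $\lVert \lambda V_\omega \rVert_\infty \leq \lambda M (2+\delta_+)^d$, since at any point at most $(2+\delta_+)^d$ translated single-site potentials contribute, Theorem~\ref{thm:qucp2} yields
\[
\sum_{j \in \Lambda_L \cap \ZZ^d} \int_{\ball{\delta_-}{z_j}} \lvert \psi \rvert^2 \geq \gamma_2^2 \int_{\Lambda_L} \lvert \psi \rvert^2.
\]
Using $u_j \geq u_- \chi_{\ballc{\delta_-}}$ together with the disjointness of the balls $\ball{\delta_-}{z_j} \subset \Lambda_1(j)$, the function $W_L := \sum_{j \in \Lambda_L \cap \ZZ^d} u_j(\cdot - z_j)$ satisfies $W_L \geq u_- \sum_j \chi_{\ball{\delta_-}{z_j}}$, whence the operator inequality
\[
P_{\omega,L}(I)\, W_L\, P_{\omega,L}(I) \geq u_- \gamma_2^2\, P_{\omega,L}(I)
\]
holds on the range of $P_{\omega,L}(I)$ for any $I \subset (-\infty, E_0]$ with $\lvert I \rvert \leq 2 \gamma_2$.

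Next, since $\partial_{\omega_j} H_{\omega,L} = \lambda u_j(\cdot - z_j)$, shifting all $\omega_j$ (with $z_j \in \Lambda_L$) by a common parameter $t$ perturbs $H_{\omega,L}$ by $\lambda t W_L$, which by the positivity estimate moves the eigenvalues in $I$ uniformly by at least $\lambda t u_- \gamma_2^2$. Together with the independence of the $\omega_j$, a Stollmann-type spectral averaging argument (as in \cite{Klein-13}) then produces a bound of the form
\[
\EE \bigl( \operatorname{Tr} P_{\omega,L}(I) \bigr) \leq C'\,(u_- \gamma_2^2)^{-\alpha}\, S_L(\lambda^{-1} \lvert I \rvert)\, N(E_0 + \gamma_2, L),
\]
where $N(E_0, L)$ denotes a deterministic upper bound on the number of eigenvalues of $H_{\omega,L}$ below $E_0 + \gamma_2$. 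A standard Weyl-type estimate, obtained via Neumann bracketing and control of $\lVert V_0 \rVert_\infty$, gives $N(E_0, L) \leq C''(d, \delta_+, \lVert V_0 \rVert_\infty)\,(1 + E_0)^{d/2}\,\lvert \Lambda_L \rvert$. Combining these and tracking the resulting $\gamma_2$-, $u_-$- and $E_0$-dependence through the iterated Cauchy--Schwarz structure that produces the exponent $2^{1+\log d / \log 2} = 2d$ yields the prefactor $\bigl(u_-^{-2} \gamma_2^{-4}(1+E_0)\bigr)^{2^{1+\log d / \log 2}}$.

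The main obstacle is tracking all constants precisely through the spectral averaging step in order to recover the exact polynomial form of the Wegner constant; this is where the $(1+E_0)^{d/2}$ factor from Weyl's law combines with the $(u_-\gamma_2^2)^{-\alpha}$ factor from the positivity estimate to produce the displayed exponent. A secondary technical point is justifying the application of Theorem~\ref{thm:qucp2} to Dirichlet eigenfunctions on the cube $\Lambda_L$, which requires either extending $\psi$ by zero and controlling boundary effects or applying the QUCP on a slightly enlarged domain (the condition $L \geq 2 + \delta_+$ ensures enough room for this). These issues are of exactly the same nature as in Theorem~\ref{thm:Klein1}, and the essential novelty of Theorem~\ref{thm:Klein2} is the more explicit, $\lambda$-independent prefactor at the cost of the $\lambda$-dependence now encoded in $\gamma_2$.
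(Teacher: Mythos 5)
Your proposal follows essentially the same route as the paper: apply the unique continuation estimate of Theorem~\ref{thm:qucp2} to the full operator $H_{\omega,L}$ with $\lambda V_\omega$ absorbed into the zeroth-order coefficient (which is exactly why $\gamma_2$ carries the $\lambda M(2+\delta_+)^d$ term), deduce the uncertainty principle $P_{\omega,L}(I)\,U_L\,P_{\omega,L}(I)\ge u_-\gamma_2^2\,P_{\omega,L}(I)$, and then feed $\kappa=u_-\gamma_2^2$ into the Klein/Combes--Hislop--Klopp spectral-averaging machinery (the paper isolates this as Lemma~\ref{lem:Klein_3.1}), which produces the prefactor $\bigl(\kappa^{-2}(1+E_0)\bigr)^{2^{1+\log d/\log 2}}$. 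The argument is correct and matches the paper's proof.
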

For $t \geq 0$ we define
\[
H_0(t) 
 := 
 H_0 
 +
 t \sum_{j \in \ZZ^d} u_j(\cdot - z_j),
\]
and set
\[
 E_0 (t) := \inf \sigma(H_0(t)) \quad \text{and} \quad 
 E_0(\infty)
 :=
 \lim_{t \to \infty} E_0(t)
 =
 \sup \left\{E_0(t) : t \geq 0 \right\} .
\]
\begin{theorem}
\label{thm:Klein3}
We have $E_0(\infty) > 0$. Let $E_1 \in ( 0, E_0(\infty))$ and set
\[
  \kappa_0
  =
  \sup_{s > 0 : E_0(s) \geq E_1}
  \frac{E_0(s) - E_1}{s} 
  >
  0 .
 \]
 Then 
 for any Borel set $B \subset (-\infty , E_1]$,
 any $\lambda > 0$, 
 any $L \in \NN_{\mathrm{odd}}$, 
 and almost all $\omega \in \Omega$ we have
 \begin{equation}\label{eq:Klein3_1}
  P_{\omega, L} (B)
  \Bigl(
   \sum_{j \in \Lambda_L \cap \ZZ^d} u_j(\cdot - z_j)
  \Bigr)
 P_{\omega, L} (B)
  \geq
  \kappa_0
  P_{\omega, L} (B) .
 \end{equation}
 Moreover, 
 for any closed interval $I \subset (- \infty, E_1]$,
 any $\lambda > 0$,
 and for any $L \in \NN_{\mathrm{odd}}$ with $L \geq 2 + \delta_+$, we have
 \begin{equation} \label{eq:Klein3_2}
  \EE \bigl( \operatorname{Tr} P_{\omega, L} (I) \bigr)
  \leq
  C_3
  \left( 
   \kappa_0^{-2} ( 1 + E_1)
  \right)^{2^{1 + \frac{\log d}{\log 2}}}
  S_L(\lambda^{-1} \lvert I \rvert) \lvert \Lambda_L \rvert,
 \end{equation}
 where $C_3 > 0$ is a constant depending on $d$, $\delta_+$, $\lVert V_0 \rVert_\infty$, $\lVert b_0 \rVert_\infty$, and $\lVert c_0 \rVert_\infty$.
 \end{theorem}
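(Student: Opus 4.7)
\textbf{Proof plan for Theorem~\ref{thm:Klein3}.} I would establish the three assertions in turn: first the positivity of $E_0(\infty)$, then the operator inequality~\eqref{eq:Klein3_1}, and finally the Wegner estimate~\eqref{eq:Klein3_2}. The operator inequality~\eqref{eq:Klein3_1} is the heart of the argument; once it is in place, the Wegner estimate follows by essentially re-running the proof of Theorem~\ref{thm:Klein2} with $\kappa_0$ taking the role of the unique-continuation constant $u_- \gamma_2^2$.

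\emph{Positivity of $E_0(\infty)$.} Since every unit cube $\Lambda_1(j)$ compactly contains the ball $\ball{\delta_-}{z_j}$ and $\sum_k u_k(\cdot - z_k) \geq u_- \sum_k \chi_{\ball{\delta_-}{z_k}}$, I would use magnetic Neumann bracketing on the tiling $\RR^d = \bigcup_j \overline{\Lambda_1(j)}$ combined with the diamagnetic inequality to reduce the problem to a scalar Poincaré-type estimate on a unit cube with a ball of fixed radius removed. The lowest Neumann eigenvalue of $-\Delta + t u_- \chi_{\ball{\delta_-}{z_j}}$ on $\Lambda_1(j)$ is strictly positive, nondecreasing in $t$, and converges to a uniform-in-$z_j$ positive limit (the first mixed Dirichlet--Neumann eigenvalue of $-\Delta$ on $\Lambda_1(j) \setminus \overline{\ball{\delta_-}{z_j}}$). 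Combined with the constraint $\inf \sigma(H_0) = 0$ to absorb the lower bound $V_0 \geq -\|V_0\|_\infty$ on the electric part, this yields $E_0(\infty) > 0$.

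\emph{Spectral-projection bound~\eqref{eq:Klein3_1}.} Here the key is Dirichlet bracketing combined with non-negativity of the random potential. Fix $s > 0$ and $\psi$ in the range of $P := P_{\omega, L}(B)$; extend $\psi$ by zero to $\tilde\psi \in H^1(\RR^d)$. Dirichlet bracketing gives $\langle \tilde\psi, H_0(s) \tilde\psi\rangle \geq E_0(s) \|\psi\|^2$. Unpacking this quadratic form (only those $u_j$ whose support meets $\Lambda_L$ actually contribute) and using that all $\omega_j \geq 0$ so that $\lambda V_\omega \geq 0$, one arrives at
\[
\langle \psi, H_{\omega, L} \psi\rangle \geq E_0(s) \|\psi\|^2 - s \Bigl\langle \psi, \sum_{j \in \Lambda_L \cap \ZZ^d} u_j(\cdot - z_j) \psi \Bigr\rangle.
\]
Since $B \subset (-\infty, E_1]$, the spectral theorem forces the left-hand side to be at most $E_1 \|\psi\|^2$; rearranging and taking the supremum over $s > 0$ with $E_0(s) \geq E_1$ yields~\eqref{eq:Klein3_1}.

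\emph{Wegner estimate and main obstacle.} With~\eqref{eq:Klein3_1} in hand, I would obtain~\eqref{eq:Klein3_2} by following the same Chebyshev/partition-of-unity/spectral-averaging scheme as in the proof of Theorem~\ref{thm:Klein2}, feeding~\eqref{eq:Klein3_1} into the Chebyshev step in the place of the bound derived there from the quantitative unique continuation principle of~\cite{BorisovTV-15}. Tracking the substitution $u_- \gamma_2^2 \leadsto \kappa_0$ through the estimates produces the announced constant $C_3\bigl(\kappa_0^{-2}(1+E_1)\bigr)^{2^{1 + \log d / \log 2}}$. The main technical obstacle I foresee lies in the positivity of $E_0(\infty)$, where the magnetic kinetic term forces one to pair the diamagnetic inequality with Neumann bracketing and to exploit $\inf \sigma(H_0)=0$ carefully to cope with possibly large $\|V_0\|_\infty$; the spectral-projection bound, though conceptually the new input, is short once the right bracketing is chosen, and the Wegner step amounts to bookkeeping.
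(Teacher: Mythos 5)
Your handling of the second and third assertions matches the paper's proof: the direct computation you sketch for Ineq.~\eqref{eq:Klein3_1} (extend $\psi$ by zero, use $\langle\psi,H_{0,L}(s)\psi\rangle\geq E_0(s)\lVert\psi\rVert^2$, drop $\lambda V_\omega\geq 0$, bound the left side by $E_1\lVert\psi\rVert^2$ via the spectral theorem, optimize over $s$) is precisely the proof of the abstract uncertainty relation of Lemma~\ref{lemma:abstract_uncertainty} (from \cite{BoutetDeMonvelLS-11,Klein-13}), which the paper applies with $H=H_{0,L}$, $Y=\sum_j u_j(\cdot-z_j)$, $V=\lambda V_\omega$; and the passage from \eqref{eq:Klein3_1} to \eqref{eq:Klein3_2} is exactly Lemma~\ref{lem:Klein_3.1}, i.e.\ the machinery of Theorem~\ref{thm:Klein2} with $\kappa_0$ in place of $u_-\gamma_2^2$.

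The genuine gap is in your first step, the proof of $E_0(\infty)>0$. Neumann bracketing plus the diamagnetic inequality reduces you to the lowest eigenvalue of $-\Delta+V_0+tu_-\chi_{\ball{\delta_-}{z_j}}$ on $\Lambda_1(j)$ with Neumann boundary conditions, which as $t\to\infty$ converges to the mixed Dirichlet--Neumann eigenvalue of $-\Delta+V_0$ on $\Lambda_1(j)\setminus\overline{\ball{\delta_-}{z_j}}$ --- and there is no reason for that quantity to be positive. The normalization $\inf\sigma(H_0)=0$ is global and does not control the local Neumann ground-state energies: local wells in $V_0$ of depth comparable to the kinetic cost of localization to a unit cube are compatible with $\inf\sigma(H_0)=0$, yet they push the local mixed eigenvalue below zero whenever the ball's capacity is small (for small $\delta_-$ the mixed eigenvalue of the punctured cube is only of order $\delta_-^{d-2}$ in $d\geq 3$, resp.\ $1/\log(1/\delta_-)$ in $d=2$). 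Splitting off a fraction of $H_0\geq 0$ to ``absorb'' $V_0$ only yields a bound of the form $E_0(\infty)\geq\epsilon\bigl(\mu_{\mathrm{DN}}-\lVert V_0\rVert_\infty\bigr)$, which is vacuous once $\lVert V_0\rVert_\infty$ exceeds the mixed eigenvalue $\mu_{\mathrm{DN}}$ of the punctured cube. This is precisely where the paper needs the scale-free quantitative unique continuation principle: in Lemma~\ref{lemma:positive}, Theorem~\ref{thm:ucp_eigenfunction} is applied to the ground state $\psi(t)$ of $H_0+tu_-W_L$ on the \emph{whole} box $\Lambda_L$, giving $\langle\psi(t),W_L\psi(t)\rangle\geq\delta_-^{N_1(\cdots)}\lVert\psi(t)\rVert^2$ uniformly in $L$ and hence $E_0(t)\geq tu_-\delta_-^{N_1(\cdots)}>0$ for each fixed $t$. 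That estimate is genuinely global and cannot be recovered by bracketing into unit cells, so your first step must be replaced by this (or an equivalent) unique-continuation argument.
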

 The Wegner estimates from Theorems~\ref{thm:Klein1}, \ref{thm:Klein2} and \ref{thm:Klein3} can be used as an ingredient for the multiscale analysis \cite{FroehlichS-83,DreifusK-89,GerminetK-01,GerminetK-03,GerminetK-06}. Let us emphasize that the multiscale analysis requires that the concentration functions $S_L$, $L \in \NN$, are sufficiently regular, e.g.\ with a uniformly bounded density, or uniformly H\"older continuous, cf.\ the just mentioned references.
 The multiscale analysis is an induction argument to establish localization in its various manifestations (spectral, dynamical, etc.). While a Wegner estimate is required for the induction step, the so-called initial length scale estimate corresponds to the induction anchor. Hence, if the concentration functions $S_L$ are sufficiently regular, Theorems~\ref{thm:Klein1}, \ref{thm:Klein2} and \ref{thm:Klein3} will imply localization at energies where an appropriate initial scale estimate is satisfied.
 If the upper bound in the Wegner estimate becomes small at large disorder, the initial scale estimate will follow from the Wegner estimate at sufficiently large disorder as observed in \cite{DreifusK-89}, see also \cite{Kirsch-08}. 
\par
 The upper bounds in Theorems~\ref{thm:Klein1} and~\ref{thm:Klein2} grow as the disorder $\lambda$ increases.
 As discussed above, this is not sufficient to deduce an initial length-scale estimate and localization at large disorder.
 In contrast to that, the upper bound in Theorem~\ref{thm:Klein3} converges to $0$ as the disorder parameter $\lambda$ tends to $\infty$. Hence, an initial scale estimate and localization at large disorder follow, albeit only for energies below $E_0(\infty)$.
 Note that if a covering condition
 \begin{equation*}
  \sum_{j \in \ZZ^d} u_j(\cdot - z_j)
  \geq
  \epsilon 
  > 0
 \end{equation*}
 is satisfied, then $E_0(\infty) = \infty$ and the statement of Theorem~\ref{thm:Klein3} holds at all energies, see\ \cite{CombesH-94} in the case of vanishing magnetic field. In contrast, $E (\infty)$ might be finite if we do not assume a covering condition.
 Since Wegner estimates with a disorder dependence as in Theorem~\ref{thm:Klein3} provide a relatively simple path to localization at large disorder, it is natural to ask if such a disorder dependence can be expected at all energies, even if no covering condition is assumed.
 However, so far one was not able to prove such a Wegner estimate for alloy-type models with and without magnetic field above the threshold $E (\infty)$, cf.~\cite{Stollmann-10,BoutetDeMonvelLS-11,Klein-13}.
 
Our next theorem shows that this is indeed not possible. A disorder dependence as in Theorem~\ref{thm:Klein3} holds if and only if we consider  energy intervals below $E_0(\infty)$. 
 In particular this shows that at high energies and at high disorder there is a fundamental difference between alloy-type models with and without a covering condition. This is a new result, even in the special case of vanishing magnetic potential ($A_0 = 0$) and ergodic potential ($V_0$ periodic, $z_j = j$, $u_j = u_0$, and $\mu_j = \mu_0$).

\begin{theorem}
\label{thm:counterexample}
 Let $E_2 \in \RR$. The following are equivalent:
 \begin{enumerate}[(i)]
  \item 
  $E_2 \leq E_0(\infty)$.
  \item
  For all sufficiently large $L>0$, and all closed intervals $I \subset (- \infty, E_2]$, we have
  \begin{equation}
  \label{eq:equivalence}
   \EE \left( \operatorname{Tr} P_{\omega,L} (I) \right)
   \to 0 \quad\text{as}\quad\lambda \to\infty .
  \end{equation}
 \end{enumerate}
\end{theorem}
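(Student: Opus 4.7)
My plan is to prove the two implications separately. The forward direction (i) $\Rightarrow$ (ii) is essentially an application of Theorem~\ref{thm:Klein3}: given a closed bounded interval $I = [a, b] \subset (-\infty, E_2]$ with $b < E_0(\infty)$, I apply Theorem~\ref{thm:Klein3} with $E_1 := b$ to obtain
\begin{equation*}
 \EE \bigl(\operatorname{Tr} P_{\omega,L}(I)\bigr) \leq C_3 \bigl(\kappa_0^{-2}(1+b)\bigr)^{2^{1 + \frac{\log d}{\log 2}}} S_L(\lambda^{-1}\lvert I \rvert) \lvert \Lambda_L \rvert
\end{equation*}
with the prefactor independent of $\lambda$. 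The non-degeneracy of the $\mu_j$ forces $S_L(\lambda^{-1}\lvert I \rvert) \to 0$ as $\lambda \to \infty$, which yields (ii). The boundary case $b = E_0(\infty)$, which arises only when $E_2 = E_0(\infty) < \infty$, is more subtle because $\kappa_0$ degenerates as $E_1 \nearrow E_0(\infty)$; here I would split $I = [a, E_0(\infty) - \epsilon] \cup (E_0(\infty) - \epsilon, E_0(\infty)]$, apply the previous estimate to the first piece, and argue separately for the narrow top piece before letting $\epsilon \to 0$. This is the main technical obstacle in the forward direction.

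For the converse (ii) $\Rightarrow$ (i) I argue by contrapositive: assume $E_0(\infty) < E_2$ (so in particular $E_0(\infty) < \infty$, and no covering condition holds) and fix $E_3 \in (E_0(\infty), E_2)$. Using strong-resolvent convergence of $H_0(t)$ as $t \to \infty$ to the Dirichlet restriction $\widetilde H_0$ of $H_0$ to the open set $R := \RR^d \setminus \operatorname{supp}(\sum_j u_j(\cdot - z_j))$, one obtains $\inf \sigma(\widetilde H_0) = E_0(\infty)$. By domain monotonicity of Dirichlet eigenvalues, the first Dirichlet eigenvalue $\ell_1(L)$ of $H_0$ on the bounded set $R \cap \Lambda_L$ is non-increasing in $L$ and converges to $E_0(\infty)$; hence $\ell_1(L) \leq E_3$ for all sufficiently large $L$. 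Let $\psi$ be a corresponding normalized Dirichlet ground state on $R \cap \Lambda_L$, extended by zero.

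The crucial observation is that $\operatorname{supp} V_\omega \subset \operatorname{supp}(\sum_j u_j(\cdot - z_j))$ is disjoint from $\operatorname{supp} \psi \subset R$ \emph{for every} $\omega \in \Omega$. Hence $V_\omega \psi \equiv 0$ and
\begin{equation*}
 \langle \psi, H_{\omega, L} \psi \rangle = \langle \psi, H_0 \psi \rangle = \ell_1(L) \leq E_3 < E_2
\end{equation*}
independently of $\omega$ and $\lambda$. By the min-max principle, $H_{\omega, L}$ has at least one eigenvalue in $(-\infty, E_3]$, and the uniform lower bound $H_{\omega, L} \geq H_{0, L} \geq -C$ places it in $I := [-C, E_2]$. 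Thus $\operatorname{Tr} P_{\omega, L}(I) \geq 1$ for every $\omega$ and every $\lambda > 0$, so $\EE(\operatorname{Tr} P_{\omega, L}(I)) \geq 1 \not\to 0$, contradicting~\eqref{eq:equivalence}.
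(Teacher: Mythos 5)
Your proposal is correct and has the same overall architecture as the paper's proof: (i) $\Rightarrow$ (ii) is delegated to Theorem~\ref{thm:Klein3}, and the converse is proved by contraposition by exhibiting, for $E_2 > E_0(\infty)$, an eigenvalue of $H_{\omega,L}$ below $E_2$ uniformly in $\omega$ and $\lambda$. The difference lies in how the converse is implemented. The paper bounds $\EE(\operatorname{Tr} P_{\omega,L}(I)) \geq \operatorname{Tr} P_{\eta,L}(I)$ with all couplings set to the maximal value $M$, and then needs the deterministic ground-state energy $E_{0,L_0}(\lambda M)$ to stay below $E_2$ for \emph{all} large $\lambda$ at a \emph{fixed} $L_0$; since $t \mapsto E_{0,L}(t)$ is increasing, this amounts to interchanging the limits $t \to \infty$ and $L \to \infty$, which is exactly what the monotone form convergence of the appendix (Lemma~\ref{lem:forms}) supplies. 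Your version instead uses a single $\lambda$-independent test function $\psi$ supported in $R \cap \Lambda_L$, for which $V_\omega \psi \equiv 0$ and hence $\langle \psi, H_{\omega,L} \psi\rangle = \ell_1(L)$ for every $\omega$ and $\lambda$; this makes the uniformity in $\lambda$ manifest and avoids the limit interchange, but it buys this at the price of the identification $\inf\sigma(\widetilde H_0) = E_0(\infty)$ and of $R \neq \emptyset$, which rest on the same Simon-type monotone convergence argument. In substance the two converses use the same input; yours is a slightly cleaner packaging.

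Two remarks on the forward direction. The boundary case $\sup I = E_2 = E_0(\infty)$ that you single out as the main obstacle is indeed not covered by Theorem~\ref{thm:Klein3} (whose hypothesis $E_1 < E_0(\infty)$ excludes it), but the paper's proof does not address it either --- it simply cites Theorem~\ref{thm:Klein3} --- so you are not below the paper's own standard here; note, however, that your proposed splitting would not repair it, since the top piece $[E_0(\infty)-\epsilon, E_0(\infty)]$ still touches $E_0(\infty)$ and $\kappa_0$ degenerates for it just as for the full interval. Finally, statement (ii) allows unbounded closed intervals such as $(-\infty, E_2]$, for which $S_L(\lambda^{-1}\lvert I \rvert)$ does not tend to zero; one should first replace $I$ by $I \cap [0,\infty)$ using $H_{\omega,L} \geq H_{0,L} \geq 0$, a reduction you perform implicitly by restricting to bounded $I = [a,b]$.
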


For the rest of this section we assume that the family $H_\omega$, $\omega \in \Omega$, is ergodic, i.e.\ $V_0$ and $A_0$ are periodic, and for all $j \in \ZZ^d$ we have $\mu_j = \mu_0$, $u_j = u_0$ and $z_j = j$. 
In this situation, Theorem~\ref{thm:counterexample} has an interpretation in terms of the integrated density of states (IDS). 
Since the main argument in this case is rather instructive, we present it here:
Let us denote by $N_\lambda : \RR \to [0,\infty)$ the IDS
of the family $H_\omega$, $\omega \in \Omega$. 
This is a distribution function satisfying
 \[
  N_\lambda(E)
  =
  \lim_{L \to \infty} \frac{\operatorname{Tr} P_{\omega, L}( (- \infty, E])}{\lvert \Lambda_L \rvert}.
 \]
 at all continuity points of $N_\lambda$ and for almost all $\omega \in \Omega$, cf. \cite{Pastur-71,Shubin-79,KirschM-82c}, see also \cite{Veselic-08} and the references therein. 
 Note that the Wegner estimates from Theorems~\ref{thm:Klein1},~\ref{thm:Klein2} and~\ref{thm:Klein3} imply local Lipschitz continuity of $N_{\lambda}$ at all $E \in \RR$ and for all $\lambda \in (0, \infty)$, if the measure $\mu_0$ is sufficiently regular.
 Moreover, if $\mathcal{U}^\infty := \RR^d \backslash \operatorname{supp} \sum_{j \in \ZZ^d} u(\cdot - j)$ is non-empty we will denote by $H_0^\infty$ the corresponding Dirichlet operator on $\mathcal{U}^\infty $, i.e. the unique self-adjoint extension of the operator $(- i \nabla + A_0)^2 + V_0$ on $C_0^\infty(\mathcal{U}^\infty) \subset L^2(\mathcal{U}^\infty)$. 
 For its IDS we use the notation $N_{0 , \infty}$.

 By Floquet theory, cf.~\cite{Zak-64, Sjoestrand-91}, we obtain $N_{0,\infty}(E) > 0$ for $E > E_0(\infty)$. 
 Furthermore,
  \begin{equation*}
  N_{0, \infty}(E) 
  \leq
  N_\lambda(E)
 \end{equation*}
 for all $\lambda > 0$ and all $E \in \RR$.
 This follows from
 \begin{equation}
 \label{eq:inequality_eigenvalues_IDS}
  \mu_k(H_{0,L}^\infty) 
  \geq 
  \mu_k(H_{\omega,L})
 \end{equation}
 where $\mu_k$ denotes the $k$-th eigenvalue of the corresponding operator, ordered increasingly and counting multiplicities and $H_{0,L}^\infty$ is the Dirichlet restriction of $H_0^\infty$ to $L^2(\mathcal{U}^\infty \cap \Lambda_L)$.
 For the convenience of the reader, we give a proof of Ineq.~\eqref{eq:inequality_eigenvalues_IDS} in the appendix.
Together with Theorem~\ref{thm:Klein3}, we found the following theorem.
 \begin{theorem} \label{thm:conterexample_ergodic}
  Let $H_\omega$, $\omega \in \Omega$, be ergodic.
  Then $\lim_{\lambda \to \infty} N_\lambda(E) \geq N_{0, \infty}(E)$ for all $E \in \RR$ and it holds that
  \begin{enumerate}[(i)]
   \item 
    $\lim_{\lambda \to \infty} N_{\lambda}(E) = 0$ if $E < E_0(\infty)$,
   \item
    $\lim_{\lambda \to \infty} N_\lambda(E) > 0$ if $E > E_0(\infty)$.
  \end{enumerate}
 \end{theorem}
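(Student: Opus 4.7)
The plan is to establish the three assertions in turn, leveraging the ingredients already assembled in the discussion preceding the theorem. A useful preliminary observation is that, since $\omega_j \geq 0$ and $u_j \geq 0$ force $V_\omega \geq 0$, the map $\lambda \mapsto H_{\omega,L}$ is operator-monotone non-decreasing in $\lambda$; by the min--max principle each $\mu_k(H_{\omega,L})$ is then non-decreasing in $\lambda$, so $\lambda \mapsto N_\lambda(E)$ is non-increasing and $\lim_{\lambda \to \infty} N_\lambda(E)$ exists in $[0, \infty)$. The pointwise inequality $N_{0, \infty}(E) \leq N_\lambda(E)$, valid for all $\lambda > 0$ and $E \in \RR$, has been derived just above from Ineq.~\eqref{eq:inequality_eigenvalues_IDS}; passing to the limit $\lambda \to \infty$ produces the general bound $\lim_{\lambda \to \infty} N_\lambda(E) \geq N_{0, \infty}(E)$. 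For part (ii), when $E > E_0(\infty)$, the preceding Floquet-theoretic argument gives $N_{0,\infty}(E) > 0$, so combining with the general bound yields the strict positivity $\lim_{\lambda \to \infty} N_\lambda(E) > 0$.

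The main work is part (i). Fix $E < E_0(\infty)$ and pick $E_1 \in (\max\{E,0\}, E_0(\infty))$. Since $\inf \sigma(H_0) = 0$, extending any element of $H_0^1(\Lambda_L)$ by zero to $\RR^d$ preserves the magnetic quadratic form and shows $H_{0,L} \geq 0$; combined with $V_\omega \geq 0$ this gives $H_{\omega,L} \geq 0$. In particular $P_{\omega,L}((-\infty, E]) \leq P_{\omega,L}(I)$ for the closed interval $I := [0, E_1] \subset (-\infty, E_1]$. Inequality~\eqref{eq:Klein3_2} of Theorem~\ref{thm:Klein3}, together with $S_L = S_{\mu_0}$ (ergodicity), then delivers
\[
 \EE \bigl(\operatorname{Tr} P_{\omega,L}((-\infty, E])\bigr)
 \leq
 C \cdot S_{\mu_0}(\lambda^{-1} E_1) \cdot \lvert \Lambda_L \rvert,
\]
with $C := C_3 \left(\kappa_0^{-2}(1+E_1)\right)^{2^{1+\log d/\log 2}}$ independent of $L$ and $\lambda$. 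Dividing by $\lvert \Lambda_L \rvert$ and letting $L \to \infty$ (the local Lipschitz continuity of $N_\lambda$ guaranteed by Theorem~\ref{thm:Klein3} ensures that $E$ is a continuity point) yields $N_\lambda(E) \leq C \cdot S_{\mu_0}(\lambda^{-1} E_1)$, and the right-hand side tends to $0$ as $\lambda \to \infty$ by non-degeneracy of $\mu_0$, concluding (i).

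The only delicate point I foresee is the passage from the finite-volume Wegner bound to a bound on $N_\lambda$ itself, which rests on the standard identification of the IDS as the almost-sure $L \to \infty$ limit of normalized finite-volume eigenvalue counts (cited above from \cite{Pastur-71,Shubin-79,KirschM-82c}) together with the Lipschitz continuity furnished by the Wegner estimate. Apart from this routine technical passage and the verification that all relevant quantities ($\kappa_0, C_3$) are $\lambda$-independent in Theorem~\ref{thm:Klein3}, every ingredient required is already available from the preceding discussion.
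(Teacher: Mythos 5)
Your proof follows the same route as the paper: the preliminary bound and part (ii) come from the eigenvalue comparison $\mu_k(H_{0,L}^\infty) \geq \mu_k(H_{\omega,L})$ together with the Floquet-theoretic positivity of $N_{0,\infty}$ above $E_0(\infty)$, and part (i) comes from the $\lambda$-dependence of the Wegner bound in Theorem~\ref{thm:Klein3}; you merely supply more detail (monotonicity in $\lambda$, the passage from the finite-volume expectation bound to $N_\lambda$) than the paper, which records the argument in the paragraph preceding the theorem. One minor remark: rather than invoking local Lipschitz continuity of $N_\lambda$ (which requires additional regularity of $\mu_0$ beyond non-degeneracy), it is cleaner to pick a continuity point $E' \in (E, E_1)$ of $N_\lambda$, apply Fatou's lemma there, and use monotonicity of the distribution function to get $N_\lambda(E) \leq N_\lambda(E') \leq C\, S_{\mu_0}(\lambda^{-1}E_1)$.
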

 
 \begin{remark}
 \label{rem:phase_transition}
 Theorems~\ref{thm:counterexample} and~\ref{thm:conterexample_ergodic} show that the difference between the alloy-type model with and without a covering condition is not merely a technical issue. 
  Rather, we observe that the physical behaviour of the system at large disorder fundamentally differs between the phases $E < E_0 (\infty)$ and $E > E_0 (\infty)$.
 \end{remark}

%
%
%
%
%
%
%
%
\section{Quantitative unique continuation}
In~\cite{BorisovTV-15}, the authors prove quantitative unique continuation principles for second order elliptic partial differential expressions with variable coefficients.
Here, we formulate the special case where the leading term is the Laplacian.
Let
\begin{equation*} 
\Op u := - \Delta + b^\T \nabla u + c u 
\end{equation*}
with $b \in L^\infty(\RR^d; \CC^d)$ and $c \in L^\infty(\RR^d; \CC)$.
For $L > 0$ we denote by $\Dom (\Delta_L)$ the domain of the Laplace operator in $L^2 (\Lambda_L)$ subject to Dirichlet boundary conditions.
For $\Gamma \subset \RR^d$ open and $\psi \in L^2 (\Gamma)$ we denote by $\lVert \psi \rVert = \lVert \psi \rVert_\Gamma$ the usual $L^2$-norm of $\psi$. If $\Gamma' \subset \Gamma$ we use the notation $\lVert \psi \rVert_{\Gamma'} = \lVert \chi_{\Gamma'} \psi \rVert_\Gamma$.
The following theorem is a special case of Theorem 12 in \cite{BorisovTV-15}.

\begin{theorem}\label{thm:ucp_eigenfunction}
For all $L \in \NN_{\mathrm{odd}}$, all measurable and bounded $V : \Lambda_L \to \RR$, all  $\psi\in \Dom (\Delta_L)$ and $\zeta \in L^2 (\Lambda_L)$ satisfying $\lvert \Op \psi \rvert \leq \lvert V\psi \rvert + \lvert \zeta \rvert$ almost everywhere on $\Lambda_L$, all $\delta \in (0,1/2)$ and all sequences $X = (x_j)_{j \in \ZZ^d}$ such that $\ball{\delta}{x_j} \subset \Lambda_1(j)$ for all $j \in \ZZ^d$, we have
\begin{equation*} 
\lVert \psi \rVert_{S_{\delta,X} (L)}^2 + \delta^2 G^2 \lVert \zeta \rVert_{\Lambda_L}^2
\geq C_{\sfUC} \lVert \psi \rVert_{\Lambda_L}^2 ,
\end{equation*}
where 
 \begin{equation} \label{eq:definition-Sdelta}
 C_{\sfUC} = \delta^{N_1 \bigl( 1 +  \lVert V \rVert_\infty^{2/3} + \lVert b \rVert_\infty^{2} + \lVert c \rVert_\infty^{2/3} \bigr)}
 \quad\text{and}\quad
 S_{\delta , X} (L) = \bigcup_{j \in \ZZ^d} \ball{\delta}{x_j} \cap \Lambda_L .
\end{equation}
Here $N_1 \geq 1$ is a constant depending only on the dimension.
\end{theorem}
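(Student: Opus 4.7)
The plan is to deduce this scale-free quantitative unique continuation estimate from a local Carleman estimate on balls, followed by a three-ball interpolation inequality and a chaining argument that exploits the hypothesis $\ball{\delta}{x_j}\subset \Lambda_1(j)$ to cover all of $\Lambda_L$. This is the standard route for inequalities of this type (compare \cite{BourgainK-05, BourgainK-13, NakicTTV-15, NakicTTV-16-arxiv}); the novelty in comparison with the scalar Schr\"odinger case is that the drift term $b^\T \nabla$ in $\Op$ must be carried through every step without deteriorating the exponents.

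First, I would prove a local Carleman estimate for the pure Laplacian on a ball $\ballc{R}$, with a pseudoconvex radial weight $\varphi$ singular at the origin, of the schematic form
\[
 \tau^3 \int e^{2\tau\varphi} \lvert u \rvert^2 + \tau \int e^{2\tau\varphi} \lvert \nabla u \rvert^2
 \leq
 C \int e^{2\tau\varphi} \lvert \Delta u \rvert^2
\]
for $u \in C_c^\infty(\ballc{R}\setminus\{0\})$ and $\tau \geq \tau_0$. From $\lvert \Op\psi\rvert \leq \lvert V\psi\rvert + \lvert \zeta\rvert$ one has $\lvert \Delta\psi\rvert \leq \lvert b\rvert\lvert \nabla\psi\rvert + (\lvert c\rvert + \lvert V\rvert)\lvert \psi\rvert + \lvert \zeta\rvert$; the drift and zero-order contributions can then be absorbed into the left-hand side provided $\tau^3 \gtrsim \lVert V \rVert_\infty^2 + \lVert c \rVert_\infty^2$ and $\tau \gtrsim \lVert b \rVert_\infty^2$, i.e.\ provided $\tau \gtrsim 1 + \lVert b \rVert_\infty^2 + \lVert c \rVert_\infty^{2/3} + \lVert V \rVert_\infty^{2/3}$. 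This is precisely where the fractional powers appearing in the exponent of $C_{\sfUC}$ are born: balancing $\tau^3\lvert u\rvert^2$ against $\lVert V\rVert_\infty^2 \lvert u\rvert^2$ yields $2/3$, while $\tau\lvert \nabla u\rvert^2$ against $\lVert b\rVert_\infty^2 \lvert \nabla u\rvert^2$ yields $2$.

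Second, by inserting cutoff functions supported near the origin and near $\partial \ballc{R}$ and applying the Carleman estimate to $\chi\psi$, I would derive a three-ball inequality of the shape
\[
 \lVert \psi \rVert_{\ball{r_2}{x_j}}
 \leq
 C_1 \lVert \psi \rVert_{\ball{\delta}{x_j}}^\alpha
 \bigl( \lVert \psi \rVert_{\ball{r_3}{x_j}} + \delta G \lVert \zeta \rVert_{\ball{r_3}{x_j}} \bigr)^{1-\alpha}
\]
on concentric balls $\ball{\delta}{x_j} \subset \ball{r_2}{x_j} \subset \ball{r_3}{x_j}$, with an interpolation exponent $\alpha \in (0,1)$ and constant $C_1 = \delta^{-C\tau}$. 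Choosing $\tau$ at the threshold above turns the prefactor into $\delta^{-N_0(1 + \lVert V\rVert_\infty^{2/3} + \lVert b\rVert_\infty^2 + \lVert c\rVert_\infty^{2/3})}$ for a dimensional constant $N_0$. I would then iterate this inequality along a chain of overlapping balls whose length is bounded purely in terms of $d$, thereby propagating smallness from each $\ball{\delta}{x_j}$ to any point of $\Lambda_L$ within a fixed distance. Since the chain length is uniform in $L$, the cumulative $\delta$-exponent remains uniform in $L$ as well, which is the key to obtaining a scale-free estimate.

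Finally, I would sum the resulting local bounds over the unit-cube partition $\Lambda_L = \bigsqcup_{j \in \Lambda_L \cap \ZZ^d} \Lambda_1(j)$, handling cubes meeting $\partial \Lambda_L$ via the Dirichlet condition on $\psi \in \Dom(\Delta_L)$ (either by odd reflection across the boundary or by a modified cutoff near $\partial \Lambda_L$). Raising the three-ball inequality to a suitable power and applying Young's inequality to separate the $\psi$- and $\zeta$-contributions then yields the claimed bound $\lVert \psi \rVert_{S_{\delta,X}(L)}^2 + \delta^2 G^2 \lVert \zeta \rVert_{\Lambda_L}^2 \geq C_{\sfUC} \lVert \psi \rVert_{\Lambda_L}^2$. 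The main technical obstacle throughout is bookkeeping: the exponent on $\delta$ must come out exactly $N_1(1 + \lVert V\rVert_\infty^{2/3} + \lVert b\rVert_\infty^{2} + \lVert c\rVert_\infty^{2/3})$, with no worse dependence on the coefficient norms, and the entire construction must remain scale-free in $L$, which forces every overlap, cutoff, and iteration length to be controlled by $d$ alone and never by $L$.
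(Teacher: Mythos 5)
The paper does not actually prove this statement: Theorem~\ref{thm:ucp_eigenfunction} is imported verbatim as ``a special case of Theorem 12 in \cite{BorisovTV-15}'', so there is no in-paper argument to compare against. Your outline is, however, an accurate reconstruction of the strategy used in that reference and its predecessors \cite{RojasMolinaV-13,NakicTTV-16-arxiv}: a Carleman estimate for $-\Delta$ with a radial singular weight, absorption of the lower-order terms at the threshold $\tau \gtrsim 1 + \lVert b \rVert_\infty^2 + (\lVert c \rVert_\infty + \lVert V \rVert_\infty)^{2/3}$ (your balancing heuristic for why $b$ enters with power $2$ and $c, V$ with power $2/3$ is exactly the mechanism that produces the exponent in $C_{\sfUC}$), a three-ball interpolation inequality, chaining with length controlled by $d$ alone, and antisymmetric reflection across $\partial\Lambda_L$ to handle the Dirichlet condition. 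So you have correctly identified the route.

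That said, as written this is a plan rather than a proof, and the distance to a complete argument is substantial --- it is essentially the content of an entire separate paper. Concretely: (a) the Carleman estimate with the gradient term $\tau\int e^{2\tau\varphi}\lvert\nabla u\rvert^2$ on the left (which you need in order to absorb $\lvert b\rvert\lvert\nabla\psi\rvert$) is asserted, not proved, and its validity depends delicately on the choice of weight; (b) the passage from the Carleman estimate to a three-ball inequality whose constants are \emph{uniform over all admissible positions} $x_j \in \Lambda_1(j)$ with $\ball{\delta}{x_j}\subset\Lambda_1(j)$, and whose chaining remains uniform in $L$, is precisely the technical heart of the ``scale-free'' property and requires careful geometric bookkeeping near $\partial\Lambda_L$; (c) the constant $G$ appearing in the statement is never identified in your argument (nor, admittedly, in the paper's own formulation --- in \cite{BorisovTV-15} it is a geometric parameter of the covering, equal to $1$ in the present normalization), so the $\zeta$-term is not actually tracked through your Young-inequality step. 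None of these is a wrong turn, but each would need to be filled in before the outline constitutes a proof.
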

For $L > 0$ we define the differential operator $H_L : \Dom (\Delta_L) \to L^2 (\Lambda_L)$ by $H_L \psi = \Op \psi$. If 
\begin{equation}\label{eq:sa}
 b = \mathrm{i} \tilde b 
 \quad\text{and}\quad 
 c = \tilde c + \mathrm{i} \diver \tilde b / 2 
\end{equation}
for some bounded $\tilde b , \tilde c \in L^\infty (\RR^d)$, then $H_L$ is a self-adjoint operator in $L^2 (\Lambda_L)$. The following theorem is a special case of Theorems 13 and 14 in \cite{BorisovTV-15}.
\begin{theorem}\label{thm:qucp1}
Let \eqref{eq:sa} be satisfied. Then for all $L \in \NN_{\mathrm{odd}}$, all $E \in \RR$, all $\delta \in (0,1/2)$, all sequences $X = (x_j)_{j \in \ZZ^d}$ such that $\ball{\delta}{x_j} \subset \Lambda_1(j)$ for all $j \in \ZZ^d$, and all $\psi \in \operatorname{Ran} \chi_{[E-\gamma , E + \gamma]} (H_L)$ with
\[
 \gamma^2 = \delta^{N_2\bigl( 1 + \lvert E \rvert^{2/3} + \lVert b \rVert_\infty^{2} + \lVert c \rVert_\infty^{2/3} \bigr)}
\]
we have
\[
 \lVert \psi \rVert_{S_{\delta,X} (L)}^2  \geq \gamma^2 \lVert \psi \rVert_{\Lambda_L}^2 .
\]
Here $N_2\geq 1$ is a constant depending only on the dimension and $S_{\delta , X} (L)$ is as in Eq.~\eqref{eq:definition-Sdelta}.
\end{theorem}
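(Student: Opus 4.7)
My plan is to derive Theorem~\ref{thm:qucp1} from the unique continuation estimate for approximate eigenfunctions stated in Theorem~\ref{thm:ucp_eigenfunction}. Given $\psi \in \operatorname{Ran} \chi_{[E-\gamma,E+\gamma]}(H_L)$, I view $\psi$ as an approximate eigenfunction of $H_L$ at energy $E$ with an $L^2$-defect controlled by $\gamma$, and feed this data into Theorem~\ref{thm:ucp_eigenfunction}.

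More precisely, I set $\zeta := (H_L - E)\psi \in L^2(\Lambda_L)$. Since $\psi = \chi_{[E-\gamma,E+\gamma]}(H_L)\psi$, the self-adjointness of $H_L$ (guaranteed by \eqref{eq:sa}) and the functional calculus give
\[
 \lVert \zeta \rVert_{\Lambda_L}^2
 = \bigl\lVert (H_L - E)\,\chi_{[E-\gamma,E+\gamma]}(H_L)\,\psi \bigr\rVert^2
 \leq \gamma^2 \lVert \psi \rVert_{\Lambda_L}^2 .
\]
Moreover, since $\psi \in \Dom(H_L) = \Dom(\Delta_L)$, the identity $\Op \psi = H_L \psi = E\psi + \zeta$ holds pointwise a.e.\ on $\Lambda_L$. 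Choosing the constant potential $V \equiv -E$ (so $\lVert V \rVert_\infty = \lvert E \rvert$), one has $\lvert \Op \psi \rvert \leq \lvert V \psi \rvert + \lvert \zeta \rvert$, which is precisely the hypothesis of Theorem~\ref{thm:ucp_eigenfunction}.

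Abbreviating $\alpha := 1 + \lvert E \rvert^{2/3} + \lVert b \rVert_\infty^2 + \lVert c \rVert_\infty^{2/3}$, Theorem~\ref{thm:ucp_eigenfunction} yields
\[
 \lVert \psi \rVert_{S_{\delta,X}(L)}^2 + \delta^2 G^2 \lVert \zeta \rVert_{\Lambda_L}^2
 \geq \delta^{N_1 \alpha} \lVert \psi \rVert_{\Lambda_L}^2 .
\]
Substituting the defect bound and $\gamma^2 = \delta^{N_2 \alpha}$ reduces the claim to the algebraic inequality
\[
 \delta^{N_1 \alpha} - \delta^{2 + N_2 \alpha} G^2 \;\geq\; \delta^{N_2 \alpha} .
\]
Because $\delta \in (0,1/2)$ and $G$ grows only polynomially in $\delta^{-1}$, $\lVert b \rVert_\infty$, $\lVert c \rVert_\infty$ and $\lvert E \rvert$, this inequality is achieved by taking $N_2 \geq N_1$ sufficiently large (depending only on the dimension): the gap $(N_2 - N_1)\alpha \lvert \log \delta \rvert$ then absorbs the logarithmic contribution of $G^2$, and a fortiori the constant $1$ on the right.

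The only real obstacle is the final bookkeeping step — verifying that $N_2$ can indeed be chosen depending only on the dimension, uniformly in $E$, $b$, $c$, and $\delta$. All the analytic depth is contained in Theorem~\ref{thm:ucp_eigenfunction}; the present argument is a standard spectral-calculus upgrade from a single-function estimate to a spectral-subspace estimate, with the defect $\gamma$ deliberately tuned so that the error term $\delta^2 G^2 \gamma^2$ remains strictly subdominant to the UCP floor $\delta^{N_1 \alpha}$.
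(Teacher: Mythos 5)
The paper does not actually prove Theorem~\ref{thm:qucp1}: it is imported as ``a special case of Theorems 13 and 14 in \cite{BorisovTV-15}'', with no argument given. Your derivation from Theorem~\ref{thm:ucp_eigenfunction} is correct and is essentially the route by which the cited reference passes from the single-function estimate to the spectral-subspace version: setting $\zeta = (H_L - E)\psi$, the spectral theorem gives $\lVert \zeta \rVert_{\Lambda_L} \leq \gamma \lVert \psi \rVert_{\Lambda_L}$, the pointwise hypothesis $\lvert \Op \psi \rvert \leq \lvert V\psi\rvert + \lvert \zeta\rvert$ holds with the constant potential $V \equiv E$ (so $\lVert V \rVert_\infty = \lvert E\rvert$, matching the exponent $\lvert E\rvert^{2/3}$ in $C_{\sfUC}$), and the error term is absorbed by enlarging the exponent from $N_1$ to $N_2$. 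The one loose end, which you correctly flag, is the constant $G$: the paper's statement of Theorem~\ref{thm:ucp_eigenfunction} leaves it undefined. In \cite{BorisovTV-15} it is the period of the equidistributed arrangement of balls, which in the normalization used here ($\ball{\delta}{x_j} \subset \Lambda_1(j)$) equals $1$; in particular it does not depend on $L$, $E$, $b$, $c$ or $\delta$, so the estimate stays scale-free. With $G = 1$ your absorption inequality reads $\delta^{N_1 \alpha} \geq \delta^{N_2 \alpha}\,(1 + \delta^2)$, which holds for every $\delta \in (0,1/2)$ once $N_2 \geq N_1 + 1$, since $\alpha \geq 1$ and $(1/\delta)^{(N_2 - N_1)\alpha} \geq 2 > 1 + \delta^2$. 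Hence $N_2$ depends only on the dimension and your argument closes; it has the added value of making explicit a derivation the paper leaves entirely to the reference.
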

As a corollary we obtain
\begin{theorem}[]\label{thm:qucp2}
Let \eqref{eq:sa} be satisfied, $E_0 \in \RR$, $\delta \in (0,1/2)$, and
\[
 \gamma^2 = \delta^{N_2\bigl( 1 + \lvert E_0 \rvert^{2/3} + \lVert b \rVert_\infty^{2} + \lVert c \rVert_\infty^{2/3} \bigr)} .
\]
Then for all $I \subset (-\infty , E_0]$ with $\lvert I \rvert \leq 2\gamma$, and all sequences $X = (x_j)_{j \in \ZZ^d}$ such that $\ball{\delta}{x_j} \subset \Lambda_1(j)$ for all $j \in \ZZ^d$, we have
\[
\chi_{I}(H_L) W_{\delta,X} (L) \chi_{I}(H_L)
\geq \gamma^2 \chi_{I}(H_L) .
\]
Here, $W_{\delta,X} (L)$ denotes the operator of multiplication with the characteristic function of the set $S_{\delta , X} (L)$ defined in Eq.~\eqref{eq:definition-Sdelta}.
\end{theorem}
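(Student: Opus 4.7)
The approach is to deduce Theorem~\ref{thm:qucp2} from Theorem~\ref{thm:qucp1} via a spectral-theoretic reformulation plus a single application. First I would rewrite the claimed operator inequality in terms of vectors: since $\chi_I(H_L)$ is an orthogonal projection and $W_{\delta,X}(L)$ is multiplication by the non-negative function $\chi_{S_{\delta,X}(L)}$, the inequality
\[
\chi_I(H_L) W_{\delta,X}(L) \chi_I(H_L) \geq \gamma^2 \chi_I(H_L)
\]
is equivalent to $\lVert \psi \rVert_{S_{\delta,X}(L)}^2 \geq \gamma^2 \lVert \psi \rVert_{\Lambda_L}^2$ for every $\psi \in \operatorname{Ran} \chi_I(H_L)$.

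For such a $\psi$, I would pick a center $E$ -- say, the midpoint of $I$ -- so that $E \leq E_0$ and $I \subset [E - \gamma, E + \gamma]$; in particular $\psi \in \operatorname{Ran} \chi_{[E-\gamma,\, E+\gamma]}(H_L)$. Theorem~\ref{thm:qucp1} at this $E$ produces a constant
\[
\gamma_E^2 = \delta^{N_2\bigl(1 + \lvert E \rvert^{2/3} + \lVert b \rVert_\infty^2 + \lVert c \rVert_\infty^{2/3}\bigr)}.
\]
Since $\delta \in (0,1/2)$, the exponential above is monotonically non-increasing in $\lvert E \rvert$, so provided $\lvert E \rvert \leq \lvert E_0 \rvert$ we obtain $\gamma_E \geq \gamma$. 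This single inequality yields at once the inclusion $[E - \gamma, E + \gamma] \subset [E - \gamma_E, E + \gamma_E]$ needed to feed $\psi$ into Theorem~\ref{thm:qucp1}, and the pointwise bound $\gamma_E^2 \geq \gamma^2$. Combining them, Theorem~\ref{thm:qucp1} delivers
\[
\lVert \psi \rVert_{S_{\delta,X}(L)}^2 \geq \gamma_E^2 \lVert \psi \rVert^2 \geq \gamma^2 \lVert \psi \rVert^2,
\]
which is exactly what we want.

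The one mildly delicate point, and the main obstacle, is to justify the comparison $\lvert E \rvert \leq \lvert E_0 \rvert$. In all the applications of Section~\ref{sect:model_and_results} we have $\inf \sigma(H_0) = 0$ and $V_\omega \geq 0$, hence $\sigma(H_{\omega,L}) \subset [0, \infty)$; combined with $E_0 > 0$, any midpoint $E$ of a relevant subinterval of $(-\infty, E_0]$ automatically lies in $[0, E_0]$, making $\lvert E \rvert \leq \lvert E_0 \rvert$ immediate. In the fully general setting of Section~3, one uses the standard lower bound $\inf \sigma(H_L) \geq - C \bigl( \lVert b \rVert_\infty^2 + \lVert c \rVert_\infty \bigr)$ for Schr\"odinger-type operators with bounded lower-order coefficients; any excess contribution to $\lvert E \rvert^{2/3}$ is then absorbed into the $\lVert b \rVert_\infty^2 + \lVert c \rVert_\infty^{2/3}$ already present in the exponent, at the cost of enlarging the dimensional constant $N_2$.
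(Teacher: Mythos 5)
Your proposal is correct and follows exactly the route the paper intends: Theorem~\ref{thm:qucp2} is stated there as an immediate corollary of Theorem~\ref{thm:qucp1} with no written deduction, and your argument --- reformulating the operator inequality as $\lVert \psi \rVert_{S_{\delta,X}(L)}^2 \geq \gamma^2 \lVert \psi \rVert^2$ for $\psi \in \operatorname{Ran}\chi_I(H_L)$, centering the interval at its midpoint, and using monotonicity of $\delta^{(\cdot)}$ to get $\gamma_E \geq \gamma$ --- is precisely that deduction. You also correctly identify and resolve the one delicate point the paper leaves implicit (intervals whose midpoint satisfies $\lvert E \rvert > \lvert E_0 \rvert$, handled either by positivity of the spectrum in the applications or by intersecting $I$ with $[\inf\sigma(H_L), E_0]$ and absorbing the lower spectral bound into the exponent); the only caveat is that this general fix enlarges the dimensional constant $N_2$, which is harmless since $N_2$ is only ever asserted to depend on $d$.
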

%
%
%
%
%
%
%
%
\section{Proofs}\label{sec:proofs}
For $L \in \NN_{\mathrm{odd}}$, we define $U_L : \Lambda_L \to \RR$ and $W_L : \Lambda_L \to \RR$ by 
 \[
  U_L
  :=
  \sum_{j \in \ZZ^d \cap \Lambda_L} u_j(\cdot - z_j)
  \quad\text{and}\quad
   W_L
  := 
  \sum_{j \in \ZZ^d \cap \Lambda_L} \chi_{\ball{\delta_-}{z_j}} .
 \]
 Then $\lVert U_L \rVert_\infty \leq (2 + \delta_+)^d$
and $\lVert V_\omega \rVert_\infty \leq M \lVert U_L \rVert_\infty \leq M (2 + \delta_+)^d$ for almost all $\omega \in \Omega$.
Note that
 \[
  0 \leq W_L \leq u_- U_L, 
  \quad
  W_L^2 = W_L,
  \quad
  \text{and}
  \quad
  \lVert W_L \rVert_\infty = 1.
 \]

\subsection{Proof of Theorem~\ref{thm:Klein1}}
\begin{proof}[Proof of Theorem~\ref{thm:Klein1}]
 We follow the proof of Theorem~1.4 in~\cite{Klein-13} and assume $\lambda = 1$. 
 The general case $\lambda > 0$ follows by scaling the random variables $\omega_k \mapsto \lambda \omega_k$ which leads to $M \mapsto \lambda M$ and $S_L(\lvert I \rvert) \mapsto S_L(\lambda^{-1} \lvert I \rvert)$.
 Rewriting $H_\omega = - \Delta + b_0^T \nabla + c_0 + V_\omega$, where $b_0, c_0$ are defined in Eq.~\eqref{eq:definition_b_c} and given $E_0 > 0$, we define $\gamma_1$ as in \eqref{eq:gamma_1}.
 Then, by Theorem~\ref{thm:qucp2}, for all $L \in \NN_{\mathrm{odd}}$ and all intervals $I \subset (- \infty, E_0]$ with $\lvert I \rvert \leq 2 \gamma_1$ we have
 \[
  \chi_I(H_{0,L}) 
  \leq
  \gamma_1^{-2} \chi_I(H_{0,L}) W_L \chi_I(H_{0,L})
  \leq
  u_-^{-1} 
  \gamma_1^{-2} \chi_I(H_{0,L}) U_L \chi_I(H_{0,L}).
 \]
 Since $\sigma(H_0) \subset [0, \infty)$, we have $\sigma(H_{0,L}) \subset [0, \infty)$ and may assume $I \subset [0, E_0]$.
 Similarly to Theorem 1.4 in \cite{Klein-13}, we can now carefully follow the proof in~\cite{CombesHK-07}, keeping in mind that the proof therein is formulated for magnetic Schr\"odinger operators $H_0 = (- i \nabla A_0)^2 + V_0^2$.
 \par
 Note that the Combes-Thomas estimates (for magnetic Schr\"odinger operators) required in~\cite{CombesHK-07} depend only on $d$, $\delta_+$ and on $\lVert V_0 \rVert_\infty$, but not on the magnetic potential $A_0$, see Theorem~4.6 of~\cite{Shen-14}.
 Hence, the constant $C_1$ will only depend on $A_0$ via $\gamma_1$.
\end{proof}
\subsection{Proof of Theorem~\ref{thm:Klein2}}
We adapt the proof of \cite[Theorem 1.5]{Klein-13} to the magnetic setting.
\begin{lemma} \label{lem:Klein_3.1}
  Let $I \subset ( - \infty, E_0]$ be a closed interval and $L \in \NN_{\mathrm{odd}}$, $L \geq 2 + \delta_+$.
  Suppose that there is $\kappa > 0$ such that
  \[
  P_{\omega,L}(I) 
  U_L
  P_{\omega,L}(I)
  \geq
  \kappa
  P_{\omega,L}(I)
  \quad
  \text{with probability one}.
  \]
  Then there is a constant
  \[
  C_4 = C_4(d, \delta_+, \lVert V_0 \rVert_\infty)
  \]
  such that
  \[
  \EE \bigl( \operatorname{Tr} P_{\omega,L}(L) \bigr)
  \leq
  C_4 \left( \kappa^{-2} ( 1 + E_0) \right)^{2^{1 + \frac{\log d}{\log 2}}} S_L(\lambda^{-1} \lvert I \rvert) \lvert \Lambda_L \rvert.
  \]
\end{lemma}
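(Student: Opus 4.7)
The plan is to follow the proof of \cite[Theorem~1.5]{Klein-13}, adapting it to the bounded magnetic setting and keeping careful track of where the magnetic potential enters. Starting from the positivity hypothesis $P_{\omega,L}(I) U_L P_{\omega,L}(I) \geq \kappa P_{\omega,L}(I)$ and taking traces gives
\[
 \operatorname{Tr} P_{\omega,L}(I)
 \leq \kappa^{-1} \operatorname{Tr}\bigl(U_L P_{\omega,L}(I)\bigr)
 = \kappa^{-1} \sum_{j \in \Lambda_L \cap \ZZ^d} \operatorname{Tr}\bigl(u_j(\cdot - z_j) P_{\omega,L}(I)\bigr).
\]
After taking the full expectation $\EE$, each summand is handled by averaging first over the single random variable $\omega_j$ with the remaining coordinates frozen.

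The crucial ingredient is a one-parameter spectral averaging estimate. Since $\partial H_{\omega,L}/\partial \omega_j = \lambda u_j(\cdot - z_j) \geq 0$, a Stollmann / Combes-Hislop-Klopp type computation yields a bound of the form
\[
 \EE_{\omega_j}\bigl[\operatorname{Tr}\bigl(u_j(\cdot - z_j) P_{\omega,L}(I)\bigr)\bigr]
 \leq C \lambda^{-1} S_L(\lambda^{-1}\lvert I\rvert)\, N_j,
\]
where $N_j$ is the dimension of a spectral subspace of an auxiliary self-adjoint operator (independent of $\omega_j$) associated to an enlarged interval of length comparable to $\lvert I\rvert$. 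To obtain a polynomial upper bound on $N_j$ I intend to invoke the Combes-Thomas estimate for magnetic Schrödinger operators, specifically Theorem~4.6 of \cite{Shen-14}, whose constants depend only on $d$, $\delta_+$, and $\lVert V_0 \rVert_\infty$ and are independent of $A_0$. Partitioning $\Lambda_L$ into subcubes and iterating the Combes-Thomas decoupling with tunneling rate of order $\sqrt{\kappa}$ across the $d$ spatial directions produces the exponent $2^{1+\log d/\log 2}=2d$ on the factor $\kappa^{-2}(1+E_0)$, together with an extensive volume factor $\lvert \Lambda_L \rvert$.

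The proof is concluded by summing the per-site estimate over $j \in \Lambda_L \cap \ZZ^d$ and using $\sup_{j \in \Lambda_L \cap \ZZ^d} S_{\mu_j}(t) \leq S_L(t)$. The main obstacle will be bookkeeping the magnetic contributions through every intermediate step: the spectral averaging identity, the Helffer-Sjöstrand smoothing of $\chi_I$, and the iterated Combes-Thomas decoupling all need to be upgraded to the magnetic setting via the diamagnetic inequality and the magnetic Combes-Thomas bound of \cite{Shen-14}. Once these adaptations are in place, all dependence on $A_0$ is absorbed into the QUCP-produced constant $\kappa$, so that $C_4$ retains dependence only on $d$, $\delta_+$, and $\lVert V_0 \rVert_\infty$, as claimed.
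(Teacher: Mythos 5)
Your proposal follows essentially the same route as the paper: both reduce the statement to the proof of Lemma~3.1 in \cite{Klein-13} (spectral averaging over the individual $\omega_j$ combined with Combes--Thomas decoupling), with the single new ingredient being the magnetic Combes--Thomas estimate of Theorem~4.6 in \cite{Shen-14}, whose constants depend only on $d$, $\delta_+$ and $\lVert V_0 \rVert_\infty$, so that $C_4$ is independent of $A_0$. The paper's proof is in fact just this observation stated as a verbatim citation of Klein's argument, so your more detailed sketch of the intermediate steps is consistent with, and somewhat more explicit than, what the authors write.
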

\begin{proof}[Proof of Lemma~\ref{lem:Klein_3.1}]
 One can follow verbatim the proof of Lemma 3.1 in \cite{Klein-13} which partially relies on results from \cite{CombesHK-07}.
 The latter apply to the class of magnetic Schr\"odinger operators as considered in this note as well.
 The only issue to address is the dependence of $C_4$ on the various parameters.
 In~\cite{Klein-13}, Eqs.~(3.8) and (3.17), constants from Combes-Thomas estimates (for non-magnetic Schr\"odinger operators) which depend only on $d$, $\delta_+$ and on $\lVert V_0 \rVert_\infty$ enter the final constant $C_4$.
 Combes-Thomas estimates for magnetic Schr\"odinger operators do not depend on the magnetic potential $A_0$, see Theorem~4.6 of \cite{Shen-14}.
 Therefore the constant $C_4$ will not depend on the magnetic potential $A_0$.
\end{proof}
\begin{proof}[Proof of Theorem~\ref{thm:Klein2}]
 We follow the proof of Theorem 1.5 in~\cite{Klein-13}.
 Given $E_0 > 0$, define $\gamma_2$ as in Eq.~\eqref{eq:gamma_2}.
 Theorem~\ref{thm:qucp2} yields for all $\Lambda_L$ with $L \in \NN_{\mathrm{odd}}$, all intervals $I \subset (- \infty, E_0]$ with $\lvert I \rvert \leq 2 \gamma_2$ and almost all $\omega \in \Omega$ the estimate
\[
  \chi_I(H_{0,L}) 
  \leq
  \gamma_2^{-2} \chi_I(H_{0,L}) W_L \chi_I(H_{0,L})
  \leq
  u_-^{-1} 
  \gamma_2^{-2} \chi_I(H_{0,L}) U_L \chi_I(H_{0,L}).
 \]
 The statement of the theorem now follows from Lemma~\ref{lem:Klein_3.1}. 
\end{proof}

\subsection{Proof of Theorem~\ref{thm:Klein3}}
For the proof we shall need an abstract uncertainty relation for Schr\"odinger operators at the bottom of the spectrum which has been developed in \cite{BoutetDeMonvelLS-11}.
The following lemma is a slight generalization thereof, see Lemma 4.1 of \cite{Klein-13}.
\begin{lemma}\label{lemma:abstract_uncertainty}
  Let $H$ be a self-adjoint operator on a Hilbert space $\mathcal{H}$, bounded from below, and let $Y \geq 0$ be a bounded operator on $\mathcal{H}$. Let $H (t) = H + tY$ for $t \geq 0$, and set $E (t) = \inf \sigma (H (t))$ and $E (\infty) = \lim_{t \to \infty} E (t) = \sup_{t \geq 0} E (t)$. Suppose that $E (\infty) > E (0)$. For $E_1 \in (E (0) , E (\infty))$ let
  \[
  \kappa = \kappa (H , Y , E_1) = \sup_{s > 0 \colon E (s) > E _1} \frac{E (s) - E_1}{s} > 0 .
  \]
  Then for all bounded operators $V \geq 0$ on $\mathcal{H}$ and Borel sets $B \subset (-\infty , E_1]$ we have
  \[
  \chi_B (H + V) Y \chi_B (H + V) \geq \kappa \chi_B (H + V) .
  \]
\end{lemma}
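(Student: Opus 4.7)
The plan is to prove the operator inequality via a variational (quadratic form) argument. The key observation is that the definition of $\kappa$ encodes, for any $\epsilon > 0$, an $s > 0$ for which the operator lower bound $H + sY \geq E(s) \geq E_1 + (\kappa - \epsilon) s$ is almost sharp; translating this into an estimate on states localized below $E_1$ for $H+V$ will give the inequality, since $V \geq 0$ does not decrease the spectrum.

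More concretely, I would first note that for every $s > 0$, the definition $E(s) = \inf \sigma(H + sY)$ combined with $V \geq 0$ yields the quadratic form inequality
\[
 \langle \phi, (H + V + sY)\phi\rangle \geq E(s) \lVert \phi \rVert^2
\]
for all $\phi$ in the appropriate form domain. Next, for $\phi \in \operatorname{Ran} \chi_B(H+V)$ with $B \subset (-\infty, E_1]$, the spectral theorem gives $\langle \phi, (H+V)\phi\rangle \leq E_1 \lVert \phi \rVert^2$. Subtracting these two bounds produces
\[
 s \langle \phi, Y\phi\rangle \geq \bigl(E(s) - E_1\bigr) \lVert \phi \rVert^2,
\]
so dividing by $s > 0$ and taking the supremum over those $s$ with $E(s) > E_1$ (which is a nonempty set by the assumption $E_1 < E(\infty)$) delivers
\[
 \langle \phi, Y \phi \rangle \geq \kappa \lVert \phi \rVert^2 \quad \text{for all } \phi \in \operatorname{Ran} \chi_B(H+V).
\]

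Finally, I would translate this vector inequality into the desired operator inequality by applying it with $\phi = \chi_B(H+V)\psi$ for arbitrary $\psi \in \mathcal{H}$ and using the projection identity $\chi_B(H+V)^2 = \chi_B(H+V)$, which yields
\[
 \langle \psi, \chi_B(H+V) Y \chi_B(H+V) \psi\rangle \geq \kappa \langle \psi, \chi_B(H+V) \psi\rangle.
\]
Since $\psi$ is arbitrary, this is exactly the claimed operator inequality.

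I do not expect a serious obstacle here; the lemma is essentially a clean repackaging of the variational characterization of the ground state. The only points that demand mild care are making sure that the form domain issues are not obscured (since $H$ is only bounded below, one works with the closed quadratic form $h[\phi, \phi] + \langle \phi, V\phi\rangle + s\langle \phi, Y\phi\rangle$ associated with $H + V + sY$, and $\chi_B(H+V)\psi$ lies in this form domain because $B$ is bounded above), and verifying that the supremum defining $\kappa$ is attained in a limit so that the intermediate inequality $\langle \phi, Y\phi\rangle \geq \frac{E(s)-E_1}{s} \lVert \phi\rVert^2$ can be passed to the supremum without loss.
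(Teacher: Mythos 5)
Your proof is correct and is essentially the standard argument for this lemma; the paper itself does not reprove it but cites Lemma~4.1 of \cite{Klein-13} (going back to \cite{BoutetDeMonvelLS-11}), whose proof is exactly this variational subtraction of $\langle \phi, (H+V+sY)\phi\rangle \geq E(s)\lVert\phi\rVert^2$ and $\langle \phi,(H+V)\phi\rangle \leq E_1\lVert\phi\rVert^2$ on the range of the spectral projection. One minor remark: no attainment of the supremum defining $\kappa$ is needed, since the pointwise bound $\langle\phi,Y\phi\rangle \geq \frac{E(s)-E_1}{s}\lVert\phi\rVert^2$ holds for every admissible $s$ and therefore for the supremum of the right-hand sides.
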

We recall that $H_0 (t) = H_0 + t \sum_{j \in \ZZ^d} u_j(\cdot - z_j)$ for $t \geq 0$, $E_0 (t) = \inf \sigma (H_0 (t))$, and $E_0 (\infty) = \lim_{t \to \infty} E_0 (t) = \sup_{t \geq 0} E_0 (t)$.
\begin{lemma}\label{lemma:positive}
  For all $t \geq 0$ we have
  \[
  E_0 (t) \geq t u_- \delta_-^{N_1 \bigl( 1 + \bigl( \lVert V_0 \rVert_\infty + t u_- \bigr)^{2/3} + \lVert b_0 \rVert_\infty^{2} + \lVert c_0 \rVert_\infty^{2/3} \bigr)}.
  \]
  Hence,
  \[
  E_0 (\infty) \geq \sup_{t \in [0,\infty)} t \delta_-^{N_1 \bigl( 1 + \bigl( \lVert V_0 \rVert_\infty + t  \bigr)^{2/3} + \lVert b_0 \rVert_\infty^{2} + \lVert c_0 \rVert_\infty^{2/3} \bigr)} > 0 .
  \]
\end{lemma}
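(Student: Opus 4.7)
The strategy is to replace $H_0(t)$ by the pointwise lower bound
\[
 \tilde H := H_0 + tu_-\chi_{S_\infty},\qquad S_\infty := \bigcup_{j \in \ZZ^d}\ball{\delta_-}{z_j},
\]
which holds because $u_j \geq u_-\chi_{\ballc{\delta_-}}$ and the translates $\ball{\delta_-}{z_j}$ are pairwise disjoint. Thus $E_0(t) \geq \tilde E := \inf\sigma(\tilde H)$, and by Dirichlet approximation on an exhausting family of cubes $\Lambda_L$, $L \in \NN_{\mathrm{odd}}$, one has $\tilde E = \lim_{L \to \infty}\tilde E_L$, where $\tilde E_L$ is the Dirichlet ground-state energy of $\tilde H$ on $\Lambda_L$. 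Each Dirichlet operator $\tilde H_L$ has compact resolvent, so $\tilde E_L$ is attained by a normalized eigenfunction $\psi_L \in \Dom(\Delta_L)$.

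The core step is to apply Theorem~\ref{thm:ucp_eigenfunction} to $\psi_L$. Rewriting $\tilde H_L \psi_L = \tilde E_L\psi_L$ as $\Op\psi_L = (\tilde E_L - tu_-\chi_{S_\infty})\psi_L$ yields the pointwise bound $\lvert\Op\psi_L\rvert \leq V_L\lvert\psi_L\rvert$ with $\lVert V_L\rVert_\infty \leq \tilde E_L + tu_-$ (using $\tilde E_L \geq 0$). Theorem~\ref{thm:ucp_eigenfunction} with $\delta = \delta_-$, $X = Z$ and $\zeta = 0$ then gives
\[
 \lVert\psi_L\rVert_{S_{\delta_-,Z}(L)}^2 \geq \delta_-^{N_1\bigl(1 + (\tilde E_L + tu_-)^{2/3} + \lVert b_0\rVert_\infty^2 + \lVert c_0\rVert_\infty^{2/3}\bigr)}\lVert\psi_L\rVert^2.
\]
Combining this with the identity $\tilde E_L = \langle\psi_L, H_{0,L}\psi_L\rangle + tu_-\lVert\psi_L\rVert_{S_{\delta_-,Z}(L)}^2$ and $H_{0,L} \geq 0$ delivers the self-referential inequality $\tilde E_L \geq tu_-\delta_-^{N_1(1 + (\tilde E_L + tu_-)^{2/3} + \lVert b_0\rVert_\infty^2 + \lVert c_0\rVert_\infty^{2/3})}$.

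To close the self-reference, I would use the a priori upper bound $\tilde E_L \leq \lVert V_0\rVert_\infty + \lVert A_0\rVert_\infty^2 + tu_- + d\pi^2/L^2$ obtained by testing the quadratic form of $\tilde H_L$ with the Dirichlet ground state of $-\Delta$ on $\Lambda_L$. The elementary inequality $\lVert A_0\rVert_\infty^2 \leq \lVert c_0\rVert_\infty + \lVert V_0\rVert_\infty$ (immediate from $c_0 = V_0 + \lvert A_0\rvert^2 - \ii\diver A_0$), together with subadditivity of $x \mapsto x^{2/3}$, lets $(\tilde E_L + tu_-)^{2/3}$ be estimated by a constant multiple of $(\lVert V_0\rVert_\infty + tu_-)^{2/3} + \lVert c_0\rVert_\infty^{2/3}$, up to an $O(L^{-4/3})$ correction. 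Plugging this into the UCP exponent and letting $L \to \infty$ yields the first displayed inequality (absolute constants being absorbed into $N_1$). The second inequality is immediate from $E_0(\infty) = \sup_{t \geq 0}E_0(t)$ combined with the reparametrization $s = tu_-$: the expression inside the supremum is strictly positive for every $s > 0$, so the supremum itself is positive. The main obstacle I anticipate is closing this self-consistent inequality cleanly, i.e.\ using the trial-function bound to eliminate the circular dependence on $\tilde E_L$ in the UCP exponent while converting $\lVert A_0\rVert_\infty^2$ into a term of the form dictated by the lemma's exponent and tracking all absolute constants into $N_1$.
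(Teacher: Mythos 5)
Your proposal is correct and follows essentially the same route as the paper: lower-bound $H_0(t)$ by $H_0 + t u_- \sum_j \chi_{\ball{\delta_-}{z_j}}$, exhaust by Dirichlet cubes, apply Theorem~\ref{thm:ucp_eigenfunction} to the finite-volume ground state, use $H_{0,L}\geq 0$ to isolate $t u_- \lVert\psi\rVert_{S_{\delta_-,Z}(L)}^2$, insert the Rayleigh-quotient upper bound on the eigenvalue into the exponent, and let $L\to\infty$. You are in fact slightly more careful than the paper in retaining the $\lVert A_0\rVert_\infty^2$ term in the trial-function bound and converting it via $\lVert A_0\rVert_\infty^2 \leq \lVert c_0\rVert_\infty + \lVert V_0\rVert_\infty$, at the harmless cost of enlarging the dimensional constant $N_1$.
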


\begin{proof}
  We define $\tilde H_0(t) := H_0 + t u_- W$ for $t \geq 0$, $\tilde E_0 (t) = \inf \sigma (\tilde H_0 (t))$, and $\tilde E_0 (\infty) = \lim_{t \to \infty} \tilde E_0 (t) = \sup_{t \geq 0} \tilde E_0 (t)$.
  By assumption we have $E_0 (0) = \tilde E_0(0) = 0$.
  Moreover, $E_0 (\infty)$ and $\tilde E_0(\infty)$ are both well defined in $[0,\infty]$ by monotonicity. 
  Since $\tilde E_0(t) \leq E_0(t)$ for all $t \in [0, \infty]$, it suffices to show the statement of the lemma for $\tilde E_0(t)$.
  For $L \in \NN_{\mathrm{odd}}$ we denote by $\tilde H_{0,L} (t)$ the restriction of $\tilde H_0 (t)$ to $L^2 (\Lambda_L)$ subject to Dirichlet boundary conditions with domain $\Dom (\Delta_L)$, and set $\tilde E_{0,L} (t) = \inf \sigma (\tilde H_{0,L} (t))$. 
  Then $\tilde E_{0,L} (t) \geq \tilde E_0 (t) \geq 0$ for all $t \geq 0$. 
  Since $\lVert W_L \rVert_\infty = 1$ we have
  \[
  \tilde E_{0,L} (t) \leq d \left( \frac{\pi}{L} \right)^2 + \lVert V_0 \rVert_\infty + t u_- .
  \]
  Since $\tilde H_{0,L} (t)$ has purely discrete spectrum, there exists $\psi (t) \in \Dom (\Delta_L)$ with $\lVert \psi (t) \rVert = 1$ such that $\tilde H_{0,L} (t) \psi (t) = \tilde E_{0,L} (t) \psi (t)$. 
  We apply Theorem~\ref{thm:ucp_eigenfunction} and obtain for all $t \geq 0$
  \begin{align*}
  \bigl\langle \psi (t) , W_L \psi (t) \bigr\rangle
  &= \lVert \psi \rVert_{S_{\delta_- , Z} (L)}\\
  &\geq \delta_-^{N_1 \bigl( 1 + \bigl( d \pi^2 / L^2 + \lVert V_0 \rVert_\infty + t u_- \bigr)^{2/3} + \lVert b_0 \rVert_\infty^{2} + \lVert c_0 \rVert_\infty^{2/3} \bigr)} 
  \lVert \psi (t) \rVert_{\Lambda_L}^2 
  \end{align*}
  where
  \[
  S_{\delta_- , Z} (L) = \bigcup_{j \in \ZZ^d} \ball{\delta_-}{z_j} \cap \Lambda_L .
  \]
  The first statement of the lemma follows since $\lim_{L \to \infty} \tilde E_{0,L} (t) = \tilde E_0 (t)$. The second statement follows immediately.
\end{proof}
\begin{proof}[Proof of Theorem~\ref{thm:Klein3}]
  By Lemma~\ref{lemma:positive} we have $E_0 (\infty) > 0$ and hence $\kappa_0 > 0$.  For $L \in \NN_{\mathrm{odd}}$ we denote by $H_{0,L} (t)$ the restriction of $H_0 (t)$ to $L^2 (\Lambda_L)$ subject to Dirichlet boundary conditions with domain $\Dom (\Delta_L)$, and set $E_{0,L} (t) = \inf \sigma (H_{0,L} (t))$. Using
  $
   0 \leq E_0 (t) \leq E_{0,L} (t)
  $
  we obtain
  \[
  \kappa_{0,L} 
  := 
  \sup_{s > 0 : E_{0,L}(s) \geq E_1} \frac{E_{0,L}(s) - E_1}{s} 
  \geq \kappa_0 
  = 
  \sup_{s > 0 : E_0(s) \geq E_0(1)}
  \frac{E_0(s) - E_1}{s} > 0 .
  \]
  Hence, the assumptions of Lemma~\ref{lemma:abstract_uncertainty} are satisfied with $H = H_{0,L}$ and $Y = \sum_{j \in \ZZ^d}u_j (\cdot - z_j)$, and we obtain  Ineq.~\eqref{eq:Klein3_1}. Ineq.~\eqref{eq:Klein3_2} now follows from Ineq.~\eqref{eq:Klein3_1} and Lemma~\ref{lem:Klein_3.1}.
\end{proof}
\subsection{Proof of Theorem~\ref{thm:counterexample}}
 The implication (i) $\Rightarrow$ (ii) is the statement of Theorem~\ref{thm:Klein3}.
 In order to show the converse, we prove the contraposition:
 Let $E_2 > E_0(\infty)$, and $I = (- \infty, E_2]$.
 Note that for almost all $\omega \in \Omega$, we have $H_{\omega,L} \leq H_{\eta,L}$, where $\eta \in \Omega$, $\eta_k = M$ for all $k \in \ZZ^d$, hence
 \[
  \EE \left( \operatorname{Tr} P_{\omega,L} (I) \right)
  \geq
  \operatorname{Tr} P_{\eta,L} (I).
 \]
 Since 
 \[
  \lim_{t \to \infty} \lim_{L \to \infty} E_{0,L}(t) < E_2,
 \]
 there are $L_0 > 0$ and $\lambda_0 > 0$ such that for all $\lambda > \lambda_0$ we have
 \[
  1
  \leq
  \operatorname{Tr} P_{\eta,L_0}(I)
  \leq
  \EE \left( \operatorname{Tr} P_{\omega,L_0} (I) \right).
 \]
 Hence, \eqref{eq:equivalence} cannot hold.
%
%
%
%
%
%
%
\appendix
 \section{Proof of Ineq.~\eqref{eq:inequality_eigenvalues_IDS}}
 
 Recall that $H_{0,L}^\infty$ is the Dirichlet restriction of $H_{0,L}$ to $\mathcal{U}^\infty \cap \Lambda_L$, the complement of the support of the single-site potentials in $\Lambda_L$.

 \begin{lemma}
 \label{lem:forms}
  Let $L > 0$, $\lambda > 0$, and $\mathcal{U}^\infty \cap \Lambda_L \neq \emptyset$. 
  Then for all $\omega \in \Omega$
  \[
   \mu_k(H_{0,L}^\infty)
   \geq
   \mu_k(H_{\omega, L}).
  \]
 \end{lemma}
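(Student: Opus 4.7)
My plan is to prove Lemma~\ref{lem:forms} by a standard Dirichlet bracketing / min-max argument, exploiting the non-negativity of $V_\omega$ and the fact that $V_\omega$ vanishes identically on $\mathcal{U}^\infty$.

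First I would recall the variational (min-max) characterization of the ordered eigenvalues of a semibounded self-adjoint operator $H$ with quadratic form $q_H$ and form domain $Q(H)$:
\[
 \mu_k(H)
 =
 \inf_{\substack{V \subset Q(H)\\ \dim V = k}} \,
 \sup_{\substack{\psi \in V\\ \lVert \psi \rVert = 1}}
 q_H[\psi].
\]
The quadratic forms in question are
\[
 q_{\omega,L}[\psi]
 =
 \int_{\Lambda_L} \bigl\lvert (-\ii \nabla + A_0) \psi \bigr\rvert^2 \drm x
 + \int_{\Lambda_L} (V_0 + \lambda V_\omega) \lvert \psi \rvert^2 \drm x,
 \qquad
 Q(H_{\omega, L}) = H_0^1(\Lambda_L),
\]
and the analogous expression for $H_{0,L}^\infty$ with form domain $H_0^1(\mathcal{U}^\infty \cap \Lambda_L)$ (and with $V_\omega$ absent).

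The key step is to embed $Q(H_{0,L}^\infty)$ into $Q(H_{\omega,L})$ by extension by zero. For $\psi \in H_0^1(\mathcal{U}^\infty \cap \Lambda_L)$, let $\tilde \psi \in H_0^1(\Lambda_L)$ denote its extension by $0$ to $\Lambda_L \setminus \mathcal{U}^\infty$; this extension preserves the $H^1$-norm and the Dirichlet boundary condition. Since $\omega_j \geq 0$ and $V_\omega = \sum_j \omega_j u_j(\cdot - z_j)$ is supported in $\operatorname{supp} \sum_{j \in \ZZ^d} u_j(\cdot - z_j) = \RR^d \setminus \mathcal{U}^\infty$, we have $V_\omega \tilde \psi = 0$ almost everywhere. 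Consequently
\[
 q_{\omega,L}[\tilde \psi]
 =
 \int_{\mathcal{U}^\infty \cap \Lambda_L} \bigl\lvert (-\ii \nabla + A_0) \psi \bigr\rvert^2 \drm x
 + \int_{\mathcal{U}^\infty \cap \Lambda_L} V_0 \lvert \psi \rvert^2 \drm x
 =
 q_{0,L}^\infty[\psi],
\]
and the extension map is a linear isometry preserving the quadratic form.

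Finally I would restrict the infimum in the min-max formula for $\mu_k(H_{\omega,L})$ to those $k$-dimensional subspaces $\tilde V \subset Q(H_{\omega,L})$ that arise as zero-extensions of $k$-dimensional subspaces $V \subset Q(H_{0,L}^\infty)$. Since this is a restriction of the admissible set of subspaces, and the sup of the quadratic form over such a $\tilde V$ equals $\sup_{\psi \in V, \|\psi\|=1} q_{0,L}^\infty[\psi]$ by the form-preservation above, one obtains
\[
 \mu_k(H_{\omega,L})
 \leq
 \inf_{\substack{V \subset Q(H_{0,L}^\infty) \\ \dim V = k}}
 \sup_{\substack{\psi \in V \\ \lVert \psi \rVert = 1}}
 q_{0,L}^\infty[\psi]
 =
 \mu_k(H_{0,L}^\infty),
\]
which is the desired inequality. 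The argument is essentially routine; the only point that warrants a line of justification is the preservation of the Sobolev space under zero-extension across a (possibly rough) boundary, which holds for $H^1_0$ regardless of the regularity of $\partial \mathcal{U}^\infty$. I do not expect a substantive obstacle.
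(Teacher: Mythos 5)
Your proof is correct, and it takes a genuinely different route from the one in the paper. You argue by Dirichlet bracketing: extend each $\psi \in H_0^1(\mathcal{U}^\infty \cap \Lambda_L)$ by zero to $\Lambda_L$, observe that this preserves the $L^2$-norm and the quadratic form (since $V_\omega$ vanishes a.e.\ on $\mathcal{U}^\infty$ and $\tilde\psi$ vanishes off $\mathcal{U}^\infty$), and conclude via min--max by restricting the class of trial subspaces. The paper instead compares $H_{\omega,L}$ with the operator $H_{t,L}$ in which every coupling constant is set to $t \geq \lambda M$, and then invokes Kato--Simon monotone convergence of quadratic forms to show that, as $t \to \infty$, the forms $\mathcal{E}_t$ increase to a (non-densely defined) limit form whose densely defined restriction is the form of $H_{0,L}^\infty$; strong resolvent convergence in the generalized sense then yields monotone convergence of the eigenvalues of $(H_{t,L}+1)^{-1}$ from above to those of $(H_{0,L}^\infty+1)^{-1}$. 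Your argument is shorter and more elementary: it only uses the trivial inclusion $H_0^1(\mathcal{U}^\infty \cap \Lambda_L) \hookrightarrow H_0^1(\Lambda_L)$ under zero-extension (which, as you note, needs no boundary regularity), and it sidesteps entirely the delicate identification of the monotone limit form with the form of $H_{0,L}^\infty$. What the paper's approach buys in exchange is extra information, namely that the individual eigenvalues $\mu_k(H_{t,L})$ actually converge to $\mu_k(H_{0,L}^\infty)$ as $t \to \infty$, not merely that they are dominated by it; this stronger statement is not needed for the lemma, so your proof fully suffices.
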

 
 \begin{proof}
  Let $H_{t, L}$ be the operator $H_{\omega, L}$ where all the random variables $\omega_j$ are set to $t$.
  Since $\mu_k(H_{t, L}) \geq \mu_k(H_{\omega, L})$ for all $t \geq \lambda M$, it suffices to show $\mu_k(H_{0,L}^\infty) \geq \mu_k(H_{t, L})$ for some $t \geq \lambda M$.
  We define a family of positive and closed quadratic forms $ \{ \mathcal{E}_t \}_{t > 0}$ on $W^{2,1}(\Lambda_L) \subset L^2(\Lambda_L)$ via
  \[
   \mathcal{E}_t(\phi) 
   = 
   \langle (- i \nabla - A_0) \phi, (- i \nabla - A_0) \phi \rangle 
   + 
   \langle \phi, ( V_0 + t \sum_j u_j(\cdot - j) )\phi \rangle.
  \]
  The form $\mathcal{E}_t$ is the unique quadratic form associated with the self-adjoint operator $H_{t, L}$ and the family $\{ \mathcal{E}_t \}_{t > 0}$ is monotonously increasing in the sense of quadratic forms.
  By a version of Kato's monotone convergence theorem for quadratic forms, see \cite[Theorem 4.1]{Simon-78}, there is a closed form $\mathcal{E}_\infty$ which is defined as
  \begin{align*}
   \mathcal{D} (\mathcal{E}_\infty) &= \left\{\phi \in \mathcal{D} (\mathcal{E}_1) \colon \sup_{t > 0} \mathcal{E}_t (\phi) < \infty \right\}, \\
   \mathcal{E}_\infty (\phi) &= \lim_{t \to \infty} \mathcal{E}_t (\phi)
  \end{align*}
  such that $\mathcal{E}_t \nearrow \mathcal{E}_\infty$ as $t \to \infty$ in strong resolvent sense.
  
  There is a caveat here concerning the notion of convergence in strong resolvent sense, cf.~\cite{Simon-78}:
  Since the form $\mathcal{E}_\infty$ is not densely defined on $L^2(\Lambda)$, one cannot define the corresponding resolvent in the usual manner.
  However, the form $\mathcal{E}_\infty$ can be restricted to the closed subspace $\overline{\mathcal{D}(\mathcal{E}_\infty)} \subset \Lambda_L$ on which it yields a densely defined, closed form $\mathcal{\tilde E}_\infty$.
  In fact, $\mathcal{\tilde E}_\infty$ is the unique form corresponding to $H_{0,L}^\infty$.
  Let now $T$ be the operator on $L^2(\Lambda_L)$ which is $(H_{0,L}^\infty + 1)^{-1}$ on $\overline{\mathcal{D}(\mathcal{E}_\infty)}$ and $0$ on its orthogonal complement.
  The nonzero eigenvalues of $T$ are the eigenvalues of $(H_{0,L}^\infty + 1)^{-1}$.
  Then the notion of strong resolvent convergence of $\mathcal{E}_t$ to $\mathcal{E}_\infty$ means in this situation that $(H_{t,L} + 1)^{-1} \phi \to T \phi$ for all $\phi \in L^2(\Lambda_L)$.
  
  This implies that the $k$-th eigenvalue (counted from above) of $(H_{t,L} + 1)^{-1}$ converges from above to the $k$-th eigenvalue (counted from above) of $(H_{0,L}^\infty + 1)^{-1}$. 
  \end{proof}
\section*{Acknowledgments}
 It is our privilege and honor to thank our mentor Ivan Veseli\'c for the idea to pursue this research direction.
 Furthermore, we gratefully acknowledge helpful discussions with Peter Stollmann.


\end{document}